\def\author#1{\gdef\autrun{\def\and{\unskip, }#1}\gdef\@author{#1}}
\def\subjclass#1{{\renewcommand{\thefootnote}{}%

\footnote{\emph{Mathematics Subject Classification (2010):} #1}}}
 \newtheorem{theorem}{\bf Theorem}
 \newtheorem{lemma}[theorem]{\bf Lemma}
 \newtheorem{cor}[theorem]{\bf Corollary}
\newtheorem{remark}[theorem]{\bf Remark}
\newtheorem{definition}[theorem]{\bf Definition}
\def\blue\color{blue}
\def\blue{\color{blue}}
\begin{document}

\baselineskip=17pt

\author{Simeon Ball, Michel Lavrauw and Tabriz Popatia}

\title{Griesmer type bounds for additive codes over finite fields, integral and fractional MDS codes \footnote{4 December 2025. The first and third author  are supported by the Spanish Ministry of Science, Innovation and Universities grant PID2020-113082GB-I00 funded by MICIU/AEI/10.13039/501100011033 and grant PID2023-147202NB-I00.}}

\date{}

\maketitle

\subjclass{94B65, 51E21, 15A03, 94B05.}

\begin{abstract}
In this article we prove Griesmer type bounds for additive codes over finite fields. These new bounds give upper bounds on the length of maximum distance separable (MDS) codes, codes which attain the Singleton bound. We will also consider codes to be MDS if they attain the fractional Singleton bound, due to Huffman. We prove that this bound in the fractional case can be obtained by codes whose length surpasses the length of the longest known codes in the integral case. For small parameters, we provide exhaustive computational results for additive MDS codes, by classifying the corresponding (fractional) subspace-arcs. This includes a complete classification of fractional additive MDS codes of size 243 over the field of order 9.\end{abstract}
\section{Introduction}

Let ${\mathbb F}_{q}$ denote the finite field with $q$ elements, where $q$ is a prime power. An {\em additive code} of length $n$ over ${\mathbb F}_{q^h}$ is a subset $C$ of ${\mathbb F}_{q^h}^n$ with the property that for all $u,v \in C$ the sum $u+v \in C$. It is easy to prove that an additive code is linear over some subfield, which we will assume to be ${\mathbb F}_q$. 
We use the notation 
$$[n,r/h,d]_q^h$$ 
to denote an additive code of length $n$ over ${\mathbb F}_{q^h}$, of size $q^r$ and minimum distance $d$, which is linear over ${\mathbb F}_q$. Our motivation comes from the fact that additive codes which are contained in their symplectic dual can be used to construct quantum stabiliser codes \cite{Rains1999},  \cite{KKKS2006}. Additive codes over small fields which have additional properties such as cyclicity have been studied in
\cite{Bierbrauer2007}, \cite{BE2000},
\cite{Danielsen2009}, \cite{Danielsen2012}, \cite{DP2006}, \cite{DR2005}. Most relevant to this article, where MDS codes are featured, are the articles  \cite{BGL2023}, \cite{Huffman2013} and \cite{YS2024}.
Recall that an additive code is {\em linear over ${\mathbb F}_{q^h}$} if $u \in C$ implies $\lambda u \in C$ for all $u \in C$ and all $\lambda \in {\mathbb F}_{q^h}$. In our notation here, a $[n,k,d]_q^1$ code is linear.
For an additive code $C$, the minimum distance $d$ is equal to the minimum weight, so this is equivalent to saying that each non-zero vector in $C$ has at least $d$ non-zero coordinates. 

The Griesmer bound \cite{Griesmer1960} for linear codes states that, if there is a $[n,k,d]_q^1$ code then
\begin{equation} \label{gb}
n \geqslant \sum_{j=0}^{k-1} \left \lceil  \frac{d}{q^j}\right  \rceil.
\end{equation}
This bound can be reformulated as follows,
\begin{equation} \label{gb2}
n \geqslant k+d-m+\sum_{j=1}^{m-1} \left \lceil  \frac{d}{q^j} \right \rceil,
\end{equation}
where $m\leqslant k$ is such that
$$
q^{m-2} < d \leqslant q^{m-1},
$$
or $m=k$ if $q^{k} < d$.
We will prove that similar bounds hold for additive codes over finite fields, see Theorem~\ref{addbound} and Theorem~\ref{addbound2}. We will also show that simply replacing $k$ by $\lceil r/h \rceil$ or $\lfloor r/h \rfloor$ in (\ref{gb}) does not lead to a valid bound for additive codes by constructing additive codes that invalidate these natural generalisations.
We will be particularly interested in codes of length $n$ and minimum distance $d$ which meet the Singleton bound
$$
|C|  \leqslant |A|^{n-d+1},
$$
which holds for all codes over an alphabet $A$, and is not limited to linear or additive codes. Codes which attain this bound are called {\it maximum distance separable codes}, or simply MDS codes. MDS codes are an important class of codes, which are implemented in many applications where we can allow $|A|$ to be large. 
As observed in previous articles, \cite[Theorem 10]{Huffman2013}, the Singleton bound for an $[n,r/h,d]_q^h$-code
can be reformulated as
$$
\lceil r/h \rceil  \leqslant n-d+1,
$$
since $n$ and $d$ are always integers, and we call codes which attain this bound for $r/h \not\in {\mathbb N}$, {\em fractional MDS} codes and for $r/h \in {\mathbb N}$, {\em integral MDS} codes.

\section{Additive Reed-Solomon codes and bounds on linear MDS codes}

The most commonly implemented MDS codes are the Reed-Solomon codes
$$
\{(f(a_1),\ldots,f(a_{q}),c_{k-1}) \ | \ f(x)=\sum_{i=0}^{k-1} c_ix^i,\  c_i \in \mathbb F_q \},
$$
where $a_1,\ldots,a_q$ denote the elements of $\mathbb F_q$ and $n=q+1$.
It follows immediately that $d=n-(k-1)$ since a polynomial of degree $k-1$ has at most $k-1$ zeros.

Our first observation is that Reed-Solomon codes extend to additive, not necessarily linear, codes as follows.
\begin{lemma}
If $S_0,\ldots, S_{k-1}$ are additive subsets of ${\mathbb F}_{q^h}$ and
$$
C=\{(f(a_1),\ldots,f(a_{q^h}),c_{k-1}) \ | \ f(x)=\sum_{i=0}^{k-1} c_ix^i,\  c_i \in S_i \},
$$
then $C$ is a fractional MDS code over ${\mathbb F}_{q^h}$ if and only if
$$
q^{kh} > \prod_{i=0}^{k-1} |S_i| > q^{(k-1)h}.
$$
\end{lemma}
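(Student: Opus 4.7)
The plan is to pin down $|C|$ and the minimum distance $d$ directly, then match them against the Singleton bound. Since $k-1<q$, distinct coefficient tuples $(c_0,\ldots,c_{k-1})$ give distinct polynomials and hence distinct codewords (already distinguishable from their $q$ evaluation coordinates, let alone the appended $c_{k-1}$), so $|C|=\prod_{i=0}^{k-1}|S_i|$, which is automatically a power of $p$. A standard Reed--Solomon style root count then delivers $d\geqslant q-k+2$: a nonzero polynomial of degree at most $k-1$ has at most $k-1$ roots in $\mathbb F_q$, so if $c_{k-1}\neq 0$ the weight is at least $(q-(k-1))+1=q-k+2$, while if $c_{k-1}=0$ the effective degree is at most $k-2$ and the last coordinate vanishes, giving $\geqslant q-(k-2)=q-k+2$ nonzero evaluations instead.

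For the ``if'' direction, assume $pq^{k-1}\leqslant\prod|S_i|<q^k$. Writing $q=p^s$ and noting that $|C|$ is a $p$-power, the hypothesis places $\log_p|C|$ in the integer interval $[(k-1)s+1,\,ks)$, so $\log_q|C|$ lies strictly in the open interval $(k-1,k)$. Hence $r/h\notin\mathbb N$ and $\lceil r/h\rceil=k$. The Singleton bound then gives $d\leqslant n-\lceil r/h\rceil+1=q-k+2$, which combined with the root bound above forces $d=q-k+2$. So the Singleton bound is attained with non-integer $r/h$ and $C$ is fractional MDS.

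For the converse, suppose $C$ is fractional MDS, interpreted (to match the construction parameter) as $\lceil r/h\rceil=n-d+1=k$ with $r/h\notin\mathbb N$. Non-integrality of $r/h$ excludes $\prod|S_i|=q^k$, and since each $|S_i|\leqslant q$ we always have $\prod|S_i|\leqslant q^k$, so the strict upper bound $\prod|S_i|<q^k$ follows at once. For the lower bound, $\lceil r/h\rceil=k$ together with $r/h\notin\mathbb N$ places $\log_q|C|$ strictly in $(k-1,k)$, and since $|C|$ is a $p$-power its smallest admissible value in that window is $p^{(k-1)s+1}=p\cdot q^{k-1}$, yielding $\prod|S_i|\geqslant pq^{k-1}$.

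The main obstacle I anticipate is the converse: specifically, justifying that $\lceil r/h\rceil$ really does equal the ambient $k$. A priori a code with many $S_i=\{0\}$, or where the nonzero $S_i$ exploit a linearised monomial $x^{p^j}$ in characteristic $p$, can be fractional MDS at a strictly smaller effective dimension, so the iff statement should be read with this matching built in (e.g.\ by insisting that the parameter $k$ coincides with $\lceil r/h\rceil$, or by imposing mild nondegeneracy on the $S_i$). Once this convention is fixed, both implications reduce, as above, to the polynomial root bound and the Singleton bound and present no further technical difficulty.
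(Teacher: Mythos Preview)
Your approach is the same as the paper's: compute $|C|=\prod_i|S_i|$, observe that the inequality $q^{k-1}p\leqslant\prod_i|S_i|<q^k$ is exactly the condition $\lceil r/h\rceil=k$ with $r/h\notin\mathbb{N}$, and combine the Reed--Solomon root bound $d\geqslant q-k+2$ with Singleton to force equality. The paper's proof is considerably terser than yours---it simply asserts that ``the condition is equivalent to $\lceil r/h\rceil=k$ with $r/h\notin\mathbb{N}$'' and does not spell out either the injectivity of the encoding or the root-counting step, taking $d=n-k+1$ as inherited from the ambient Reed--Solomon code.

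Your worry about the converse is well founded and is not addressed in the paper either. For instance, with $q=4$, $k=3$, $S_0=\mathbb{F}_4$, $S_1=\{0\}$, $S_2=\{0,1\}$, one gets $|C|=8$, $\lceil r/h\rceil=2$, and $d=4$ (since $x\mapsto x^2$ is Frobenius), so $C$ is a fractional MDS $[5,3/2,4]_2^2$ code while $\prod_i|S_i|=8<32=q^{k-1}p$. The paper's two-line proof simply identifies ``fractional MDS'' with ``$\lceil r/h\rceil=k$ and $r/h\notin\mathbb{N}$'' without justifying the forward implication, so your proposed reading---that the lemma is meant with $k=\lceil r/h\rceil$ built in---is exactly how the paper treats it.
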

\begin{proof}
By construction $|C|=\prod_{i=0}^{k-1} |S_i|$. Since the sets $S_i$ are additive, it follows that $C$ is an additive code, and the condition $q^{kh}>|C|=q^r>q^{(k-1)h}$ implies $\lceil r/h\rceil=k$, and so $C$ is a fractional MDS code since $d=n-(k-1)$.
\end{proof}
There are constructions of additive MDS codes which are based on restricting the evaluation set \cite{YS2024}, rather than restricting the set of coefficients as we have done here. In this paper, we are interested in finding the longest MDS codes and in particular, we would like to construct additive codes which are longer than their linear counterparts.

The Griesmer bound gives two important bounds for linear MDS codes. These results are well-known, but we list these as theorems since we will obtain similar bounds for additive MDS codes.

\begin{theorem} \label{dboundlinear}
If $k\geqslant 2$ and there is a $[n,k,d]^1_{q}$ linear MDS code then
$$
d \leqslant  q, \quad \mbox{ and } \quad 
n \leqslant q+k-1.
$$
\end{theorem}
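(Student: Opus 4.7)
The plan is to combine the two available pieces of information about a linear MDS code: the Singleton bound and the Griesmer bound (\ref{gb}). Since $C$ attains the Singleton bound, we have the identity
$$n = k + d - 1,$$
so the upper bound on $n$ will follow immediately once we have shown $d \leqslant q$. Thus the whole theorem reduces to establishing $d \leqslant q$.

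To show $d\leqslant q$, I would substitute $n = k + d - 1$ into the Griesmer bound
$$k + d - 1 \;\geqslant\; \sum_{j=0}^{k-1} \left\lceil \frac{d}{q^j}\right\rceil.$$
The $j=0$ term on the right is exactly $d$, so after cancelling $d$ from both sides one is left with
$$k - 1 \;\geqslant\; \sum_{j=1}^{k-1} \left\lceil \frac{d}{q^j}\right\rceil.$$
The hypothesis $k \geqslant 2$ guarantees that this sum has at least one term, namely $\lceil d/q\rceil$. Each of the $k-1$ terms on the right is at least $1$ (since $d \geqslant 1$), and isolating the $j=1$ term gives
$$k-1 \;\geqslant\; \left\lceil \frac{d}{q}\right\rceil + (k-2),$$
whence $\lceil d/q\rceil \leqslant 1$, i.e.\ $d \leqslant q$. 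Combining with $n = k+d-1$ yields $n \leqslant q + k - 1$.

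There is no real obstacle here; the only thing to be careful about is the trivial bookkeeping that for $j \geqslant 1$ and $d \geqslant 1$ the ceiling $\lceil d/q^j\rceil$ is at least $1$, which is needed to absorb the remaining $k-2$ terms and force $\lceil d/q\rceil \leqslant 1$. The proof is essentially a one-line application of Griesmer combined with MDS-equality, so this will be short.
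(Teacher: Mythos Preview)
Your proof is correct and follows exactly the same route as the paper: substitute the MDS equality $n=k+d-1$ into the Griesmer bound, cancel the $j=0$ term to obtain $k-1\geqslant\sum_{j=1}^{k-1}\lceil d/q^j\rceil$, and use $k\geqslant 2$ to conclude $d\leqslant q$. The only difference is that you spell out the bookkeeping (each remaining term is at least~$1$) which the paper leaves implicit.
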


\begin{proof}
The Griesmer bound for a linear $[n,k,d]_q^1$ MDS code $C$ can be rewritten as
$$
n-d=k-1 \geqslant  \sum_{j=1}^{k-1} \left \lceil  \frac{d}{q^j} \right \rceil.
$$
It follows that if $k \geqslant 2$ then $d \leqslant q$ and so, since $C$ is MDS,
$n=d+k-1 \leqslant q+k-1$.
\end{proof}

\begin{theorem} \label{kboundlinear}
If $n\geqslant k+2$ and there is a $[n,k,d]_{q}^1$ linear MDS code then
$$
k \leqslant  q-1.
$$
\end{theorem}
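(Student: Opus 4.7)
The plan is to deduce this bound from Theorem~\ref{dboundlinear} applied to the dual code. The key classical fact I will invoke is that the dual of a linear MDS code is again a linear MDS code: if $C$ is an $[n,k,d]_q^1$ linear MDS code, then $C^\perp$ is an $[n,n-k,k+1]_q^1$ linear MDS code. This is standard and follows, for instance, from the characterization of MDS codes via their generator matrices (every $k$ columns are independent iff every $n-k$ columns of the parity-check matrix are independent).

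Given this, the argument is short. The hypothesis $n \geqslant k+2$ translates into $n-k \geqslant 2$, so the dual $C^\perp$ has dimension at least $2$. Applying Theorem~\ref{dboundlinear} to $C^\perp$ (which has minimum distance $k+1$), I obtain
$$
k+1 \leqslant q,
$$
which is exactly the desired inequality $k \leqslant q-1$.

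The only step that needs any justification beyond citing Theorem~\ref{dboundlinear} is the MDS-duality fact, which is not stated as a separate lemma in the excerpt. I would either cite it as a standard result or include a one-line reminder of the proof via parity-check matrices. There is no real obstacle here; the proof is essentially a single application of the already-established bound to the dual code, parallel in spirit to the derivation of Theorem~\ref{dboundlinear} itself from the Griesmer bound.
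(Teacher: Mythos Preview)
Your proposal is correct and follows exactly the same approach as the paper's proof: apply Theorem~\ref{dboundlinear} to the dual code, using the classical fact that the dual of a linear MDS code is a linear $[n,n-k,k+1]$ MDS code. The paper's proof is essentially the one-sentence version of what you wrote.
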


\begin{proof}
Since the dual of a linear MDS code is a linear $[n,n-k,k+1]$ MDS code, Theorem~\ref{dboundlinear} implies that if $n \geqslant k+2$ then $k \leqslant q-1$.
\end{proof}

Interestingly, we will see that these restrictions do not carry over to additive codes. We will provide examples which better these bounds. Motivated by this fact we prove Griesmer type bounds for additive codes and then consider the consequences of these bounds for MDS codes. We will obtain similar bounds to Theorem~\ref{dboundlinear} and Theorem~\ref{kboundlinear}, which will apply to additive codes, in Theorem~\ref{dbound} and Theorem~\ref{kbound}.

\section{Additive codes over finite fields}

An $[n,k,d]_q$ linear code $C$ can be defined as the rowspace of a $k \times n$ matrix called a {\em generator matrix} for $C$. 
An $[n,r/h,d]_{q}^h$ additive code $C$ can then be defined as the ${\mathbb F}_q$-space spanned by the rows of ${G}$, an $r \times n$ matrix with entries in ${\mathbb F}_{q^h}$, again called a {\em generator matrix} for $C$. We will always set $k=\lceil r/h \rceil$.

The elements of ${G}$ are elements of ${\mathbb F}_{q^h}$, which we can write out over a basis $\mathcal B$
for ${\mathbb F}_{q^h}$ over ${\mathbb F}_q$ to obtain a $r \times nh$ matrix $\tilde G$ whose elements are from ${\mathbb F}_q$, and whose columns are grouped together in $n$ blocks of $h$ columns.
Each block of $h$ columns spans a projective subspace of PG$(r-1,q)$ of (projective) dimension at most $h-1$. We define ${\mathcal{X}}_G(C)$ to be the multiset of subspaces which we obtain from the $n$ blocks of $h$ columns of $\tilde G$ in this way. The fact that ${\mathcal{X}}_G(C)$ is independent of the choice of the basis $\mathcal B$ is easily seen by observing that each element of the multiset ${\mathcal{X}}_G(C)$ is obtained as the column space of one of the blocks in $\tilde G$, which is an $r\times h$ matrix. A change of basis might change the matrix, but not its column space. Unless the choice of the generator matrix is not clear from the context, we usually just write ${\mathcal X}(C)$ instead of ${\mathcal{X}}_G(C)$.

\begin{definition}
A projective $h-(n,r,d)_q$ system is a multiset $\mathcal S$ of $n$ subspaces of ${\mathrm{PG}}(r-1,q)$ of dimension at most $h-1$ such that each hyperplane of ${\mathrm{PG}}(r-1,q)$ contains at most $n-d$ elements of $\mathcal S$ and some hyperplane contains exactly $n-d$ elements of $\mathcal S$.
\end{definition}

\begin{theorem}\label{thm:proj_system}
If $C$ is an additive $[n,r/h,d]_{q}^h$ code, then ${\mathcal{X}}(C)$ is a projective $h-(n,r,d)_q$ system, and conversely, each projective $h-(n,r,d)_q$ system defines an additive $[n,r/h,d]_{q}^h$ code.
\end{theorem}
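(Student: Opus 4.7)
The plan is to use the standard dictionary between codewords $vG$ and hyperplanes of $\mathrm{PG}(r-1,q)$ from the linear case, adapted so that the subfield expansion of $G$ over $\mathcal B$ does the bookkeeping. First, since $C$ is $\mathbb F_q$-linear of size $q^r$, the map $\mathbb F_q^r\to C$, $v\mapsto vG$, is a bijection and in particular the rows of $G$ are $\mathbb F_q$-independent. Fixing the basis $\mathcal B=\{b_1,\dots,b_h\}$ and writing $g_{ji}=\sum_\ell g_{ji}^{(\ell)} b_\ell$, the $i$-th coordinate of $vG$ equals $\sum_\ell\bigl(\sum_j v_j g_{ji}^{(\ell)}\bigr)b_\ell$, which vanishes in $\mathbb F_{q^h}$ iff $\sum_j v_j g_{ji}^{(\ell)}=0$ for every $\ell$. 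Equivalently, the $i$-th coordinate of $vG$ is zero iff $v$ annihilates every column of the $i$-th block $\tilde G_i$ of $\tilde G$, iff the subspace $X_i\in\mathcal X(C)$ spanned by those columns lies in the hyperplane $H_v$ of $\mathrm{PG}(r-1,q)$ defined by $v$.

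This yields the key identity
\[
\mathrm{wt}(vG)\;=\;n-|\{X\in\mathcal X(C): X\subseteq H_v\}|,\qquad 0\ne v\in\mathbb F_q^r.
\]
Non-zero scalar multiples of $v$ cut out the same hyperplane and produce codewords of the same weight, so the right-hand side descends to a well-defined function on hyperplanes of $\mathrm{PG}(r-1,q)$. The minimum distance of $C$ being $d$ is then exactly the statement that every hyperplane contains at most $n-d$ elements of $\mathcal X(C)$, with equality attained by some hyperplane. This is the definition of a projective $h$-$(n,r,d)_q$ system.

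For the converse I would start with a projective $h$-$(n,r,d)_q$ system $\mathcal S=\{X_1,\dots,X_n\}$, choose for each $i$ an $r\times h$ matrix $\tilde G_i$ over $\mathbb F_q$ whose column space is $X_i$ (padding with zero columns when $\dim X_i<h$), concatenate to form $\tilde G$, and recombine each block via $\mathcal B$ into a matrix $G$ with entries in $\mathbb F_{q^h}$. Since $n-d<n$, no hyperplane contains every $X_i$, so the $X_i$'s jointly span $\mathrm{PG}(r-1,q)$; hence $\tilde G$, and therefore $G$, has $\mathbb F_q$-independent rows and its $\mathbb F_q$-row space is an additive code of size $q^r$ with $\mathcal X(C)=\mathcal S$ by construction. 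Applying the same weight-vs-containment identity then forces the minimum distance to be exactly $d$.

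The only real thing to watch is the routine bookkeeping: the independence of $\mathcal X(C)$ from the choice of $\mathcal B$ and of the particular spanning matrices $\tilde G_i$ in the converse both reduce to the fact, already observed in the paper, that $X_i$ depends only on the column space of $\tilde G_i$. The nontrivial content is the weight-vs-containment identity, and it works here precisely because passing to the $\mathcal B$-expansion converts a single vanishing condition in $\mathbb F_{q^h}$ into the $h$ independent linear $\mathbb F_q$-conditions that cut out $H_v$, which is exactly what is needed to turn additive weights into incidence counts.
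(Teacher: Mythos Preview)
Your argument is correct and follows the same route as the paper's own proof: the paper simply states (referring to the integral case in \cite{BGL2023}) that for a hyperplane $H$ with dual coordinates $a\in\mathbb F_q^r$, the codeword $aG$ has weight at least $d$ if and only if $H$ contains at most $n-d$ elements of $\mathcal X(C)$, which is exactly your weight--containment identity $\mathrm{wt}(vG)=n-|\{X\in\mathcal X(C):X\subseteq H_v\}|$. You have unpacked this in full detail via the $\mathcal B$-expansion and also spelled out the converse construction (including why the $X_i$ span, so that $\tilde G$ has rank $r$), whereas the paper leaves all of that implicit; but the underlying idea is identical.
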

\begin{proof}
The proof is essentially the same as the proof for the integral case, see Theorem 4 in \cite{BGL2023}.
If $H$ is a hyperplane of ${\mathrm{PG}}(r-1,q)$ with dual coordinates $a\in {\mathbb{F}}_q^r$, then the fact that the codeword $aG$ has weight at least $d$ is equivalent to the condition that $H$ contains at most $n-d$ elements of ${\mathcal{X}}(C)$. 
\end{proof}

We say an additive code $C$ is {\em unfaithful} if not all the elements of $\mathcal X(C)$ are of dimension $h-1$, and {\em faithful} otherwise.  

\begin{remark} \label{onlyrem}
Note that one can always extend the subspaces of $\mathcal X(C)$ so that they have dimension $h-1$. By Theorem \ref{thm:proj_system}, this can be done arbitrarily without the minimum distance decreasing. Thus, an unfaithful additive $[n,r/h,d]_{q}^h$ code $C$ can always be converted in a faithful additive $[n,r/h, \geqslant d]_{q}^h$ code.
\end{remark}

The above notion of an unfaithful code is defined in terms of the corresponding projective system. The coding-theoretic interpretation of the notion is as follows.

\begin{lemma}
An additive code $C$ over ${\mathbb{F}}_{q^h}$ is unfaithful if and only if there is a coordinate position in which the values of ${\mathbb{F}}_{q^h}$, which appear in the codewords of $C$, are contained in an ${\mathbb{F}}_q$-subspace of dimension at most $h-1$.
\end{lemma}
\begin{proof}
Let $C$ be an additive code with generator matrix $G$, and corresponding projective system ${\mathcal{X}}(C)$. Let $\tilde G$ denote the $r\times nh$ matrix obtained from $G$ as described above.

Suppose that $C$ is unfaithful and that the $i$-th block $B$ of $h$ columns of $\tilde G$ determines an element of ${\mathcal{X}}(C)$ of dimension at most $h-2$. Then the $h$ columns of $B$ are linearly dependent, i.e. there exists a nonzero $y\in {\mathbb{F}}_q^h$ such that $By^T=0$. In other words, the rows of $B$, considered as vectors in ${\mathbb{F}}_q^h$ are orthogonal to some fixed nonzero vector $y$. Since the $i$-th coordinate $c_i\in {\mathbb{F}}_{q^h}$ of a codeword $c$ of $C$, is a linear combination of the rows of $B$, when considered as a vector in ${\mathbb{F}}_q^h$, the claim follows.
The reverse implication follows by reversing the above argument.
\end{proof}


\begin{remark}
Consider an additive subcode $C_0$ of an additive code $C$. We can choose a basis for the code $C$ so that the first $r_0$ rows of $\tilde{G}$ span $C_0$. The subspaces of $\mathcal X(C_0)$ are obtained from the subspaces of $\mathcal X(C)$ by projecting from $\pi$, the subspace spanned by the last $r-r_0$ coordinates. Then $C_0$ is faithful if and only if $C$ is faithful and all the elements of $\mathcal X(C)$ do not intersect, are skew to, $\pi$.
\end{remark}

\section{A bound for additive codes over finite fields} \label{boundsec}

The geometric approach described in the previous section will be used to prove the following bound for additive codes.

\begin{theorem} \label{addbound}
If there is a $[n,r/h,d]_{q}^h$ additive code  then
$$
n \geqslant \lceil r/h \rceil +d-m+ \left\lceil  \frac{d}{f(q,m)} \right\rceil,
$$
where $r=(k-1)h+r_0$, $k=\lceil r/h \rceil $, $1 \leqslant r_0 \leqslant h$,
$$
f(q,m)=\frac{q^{(m-2)h+r_0}(q^h-1)}{q^{(m-2)h+r_0}-1}
$$
for all $m$ such that $2 \leqslant m \leqslant k$ and is maximised when $m$ is such that
$$
q^{(m-2)h+r_0} < d \leqslant q^{(m-1)h+r_0}
$$
and $m=k$ if 
$$
d>q^{r}.
$$
\end{theorem}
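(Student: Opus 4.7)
The plan is to apply Theorem~\ref{thm:proj_system} and work with the projective system $\mathcal{X}=\mathcal{X}(C)$ in $\mathrm{PG}(r-1,q)$. By extending subspaces if necessary (which does not decrease the minimum distance), I may assume $\mathcal{X}$ is faithful, so every element has projective dimension exactly $h-1$.

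First, by the definition of a projective $h$-$(n,r,d)_q$ system, there is a hyperplane $H$ of $\mathrm{PG}(r-1,q)$ with $|\mathcal{X}\cap H|=n-d$. The subspaces of $\mathcal{X}\cap H$ form the projective system of a ``residual'' additive code of length $n-d$ living in $H\cong\mathrm{PG}(r-2,q)$, and I would apply a Singleton-style inequality to this residual code to get
\[
n - d \;\geq\; \lceil r/h\rceil - m - 2 + d_1,
\]
where $d_1$ will be bounded below by a further averaging argument. Rearranging this will give the theorem, provided one establishes $d_1 \geq \lceil d/f(q,m)\rceil$.

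Second, the key step is to prove the lower bound $d_1 \geq \lceil d/f(q,m)\rceil$ via a pencil-averaging argument. Take a subspace $V\subseteq H$ of codimension $mh+r_0$ in $\mathrm{PG}(r-1,q)$. The pencil of hyperplanes of $\mathrm{PG}(r-1,q)$ containing $V$ has $(q^{mh+r_0}-1)/(q-1)$ members, each containing at most $n-d$ elements of $\mathcal{X}$. Double-counting incidences between this pencil and $\mathcal{X}$, while partitioning the subspaces $S\in\mathcal{X}$ according to $\dim_{\mathbb{F}_q}(S\cap V)$, yields the ratio
\[
f(q,m)=\frac{q^{mh+r_0}(q^h-1)}{q^{mh+r_0}-1}
\]
as the natural averaging factor, and forces the estimate $|\mathcal{X}\cap V|\leq (n-d)-\lceil d/f(q,m)\rceil$. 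The appearance of $r_0=r-(k-1)h$ is precisely what is needed to match the $h$-block structure of $C$ with the codimension of $V$.

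The main obstacle will be the bookkeeping in the double-counting step: the incidence contribution of $S\in\mathcal{X}$ depends non-uniformly on $\dim(S\cap V)$, since generic $S$ meet $V$ in small dimension while special $S$ contribute more, and one must argue that the balance produces $f(q,m)$ in the stated form. The boundary case $d>q^r$ (handled by taking $m=k-2$) corresponds to $V$ being as small as the chain of subspaces permits before the averaging ceases to yield a non-trivial contribution, and the case $m=0$ recovers the Singleton bound when $d\leq f(q,0)$.
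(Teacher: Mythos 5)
Your proposal contains the right averaging mechanism in spirit, but two of its load-bearing steps fail as stated. First, the pencil parameters are off by $h$. If $V$ has vector codimension $c$ in ${\mathrm{PG}}(r-1,q)$, the pencil of hyperplanes through $V$ has $(q^c-1)/(q-1)$ members, each containing at most $n-d$ elements of $\mathcal{X}$; an element of $\mathcal{X}$ contained in $V$ lies in all of them, while an element not contained in $V$ (vector dimension at most $h$) lies in at least $(q^{c-h}-1)/(q-1)$ of them. Double counting then gives
$$
|\mathcal{X}\cap V| \;\leqslant\; (n-d) - d\,\frac{q^{c-h}-1}{q^{c-h}(q^h-1)},
$$
so the averaging factor is $\frac{q^{c-h}(q^h-1)}{q^{c-h}-1}$, which equals $f(q,m)$ only when $c=(m+1)h+r_0$, not $c=mh+r_0$ as you chose; your choice produces $f(q,m-1)$, and at $m=0$ (precisely the case needed to deduce Theorem~\ref{dbound}) it degenerates completely, since $c-h=r_0-h\leqslant 0$ makes the lower incidence count zero and yields nothing beyond $|\mathcal{X}\cap V|\leqslant n-d$.

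The more serious gap is the source of the term $\lceil r/h\rceil-m-2$. Your ``Singleton-style inequality'' $n-d\geqslant \lceil r/h\rceil-m-2+d_1$ for the residual system $\mathcal{X}\cap H$ is unjustified: the elements of $\mathcal{X}$ inside $H$ need not span enough for a Singleton bound to deliver dimension $k-m-1$, the parameter $m$ has no mechanism for entering a Singleton bound at all, and $d_1$ (the residual minimum distance) is controlled by vector-codimension-$2$ subspaces (hyperplanes of $H$), which your $V$ of codimension $mh+r_0$ is not --- so your estimate on $|\mathcal{X}\cap V|$ never bounds $d_1$, and the two halves of your argument do not connect. The paper obtains the missing term with no residual step at all: choose any $k-m-2$ elements of $\mathcal{X}$; their span has vector dimension at most $(k-m-2)h = r-((m+1)h+r_0)$, so it extends to a subspace $V$ of codimension exactly $(m+1)h+r_0$ containing them (in matrix terms, row-reduce those blocks and project onto the last $(m+1)h+r_0$ coordinates, which is the paper's construction of the multiset $T$ and the shortening parameter $s\geqslant k-2-m$). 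Then $k-m-2\leqslant |\mathcal{X}\cap V| \leqslant (n-d)-\lceil d/f(q,m)\rceil$ is exactly the claimed bound, so the correct completion of your averaging idea discards the residual/Singleton scaffolding entirely. Finally, you omit the last assertion of the statement --- that the bound is maximised at $m_0$ with $q^{m_0h+r_0}<d\leqslant q^{(m_0+1)h+r_0}$ --- which the paper proves by bounding $B(m+1)-B(m)$ from above for $m\geqslant m_0$ and from below for $m\leqslant m_0-1$; your remark about boundary cases does not substitute for this calculation.
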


\begin{proof}

 Let $C$ be an additive $[n,r/h,d]_{q}^h$ code, with generator matrix $G$, and let $\tilde G$ and $\mathcal X(C)$ be as described above.
 
By performing elementary row operations, we can suppose that $\tilde G$ is in row-echelon form. Let $m$ be an integer such that $2 \leqslant m \leqslant k$. Consider the sub-matrix of $\tilde G$ formed by the last $(m-1)h+r_0$ rows. Since $\tilde G$ is in row-echelon form the first 
$$
s =(r-(m-1)h-r_0)/h=k-m
$$
blocks of $h$ columns of this sub-matrix are all zero.

Let $\tilde{G}_1$ be the submatrix of $\tilde{G}$ formed by the last $(m-1)h+r_0$ rows and the last $n-s$ blocks of columns. Then $\tilde{G}_1$ generates an additive $[n-s,m-1+r_0/h,\geqslant d]_q^h$ code $C_1$. 
 
Let $\mathcal X_1$ be the multi-set of size $n-s$ of the subspaces of dimension at most $h-1$ in PG$((m-1)h+r_0-1,q)$ spanned by the $n-s$ blocks of columns of $\tilde{G}_1$.
 
A hyperplane of PG$((m-1)h+r_0-1,q)$ contains at most $n-s-d$ of the subspaces of $\mathcal X_1$, since the non-zero codewords $u\tilde{G}_1$ of $C_1$ have weight at least $d$.

Each subspace of $\mathcal X_1$ is contained in at least 
$$
(q^{(m-2)h+r_0}-1)/(q-1)
$$
hyperplanes. Since there are 
$$
(q^{(m-1)h+r_0}-1)/(q-1)
$$
hyperplanes in PG$((m-1)h+r_0-1,q)$ 
we have
\begin{equation} \label{eq1}
(q^{(m-2)h+r_0}-1)(n-s) \leqslant (q^{(m-1)h+r_0}-1)(n-s-d).
\end{equation}
This implies
$$
n \geqslant s+d+\left \lceil  \frac{d}{f(q,m)}\right  \rceil.
$$
from which the bound follows using $s=k-m$.
 
 It remains to prove that the bound is maximised by choosing $m=m_0$, where $m_0$ is such that 
 $$
q^{(m_0-2)h+r_0} < d \leqslant q^{(m_0-1)h+r_0}. 
$$

Let
$$
B(m)=\lceil r/h \rceil +d-m+ \left \lceil  \frac{d}{f(q,m)} \right \rceil.
$$
Then
$$
B(m+1)-B(m)=-1+\left\lceil  \frac{d}{f(q,(m+1))}\right \rceil-\left\lceil  \frac{d}{f(q,m)} \right\rceil
$$
and therefore
$$
-1+\left \lceil  \frac{d}{f(q,(m+1))} -\frac{d}{f(q,m)}\right \rceil-1 \leqslant 
B(m+1)-B(m) \leqslant -1+\left\lceil  \frac{d}{f(q,(m+1))} -\frac{d}{f(q,m)}\right\rceil.
$$
For $m \geqslant m_0$ we have that
$$
B(m+1)-B(m)\leqslant -1+\left\lceil  \frac{d}{q^{(m-1)h+r_0}} \right\rceil = -1+1=0.
$$
For $m \leqslant m_0-1$ we have that
$$
B(m+1)-B(m)
\geqslant 
-2+\left \lceil  \frac{d}{q^{(m-1)h+r_0}} \right \rceil \geqslant -2+2=0.
$$
This shows that the bound is maximised for $m=m_0$.
\end{proof}

In the integral case the bound in Theorem~\ref{addbound} gives a bound similar to (\ref{gb2}), the Griesmer bound for linear codes. Observe that the only difference from the Griesmer bound is that the ceiling function appears outside the sum and not inside.

\begin{cor}
If there is a $[n,k,d]_{q}^h$ additive code  then
$$
n \geqslant k+d-m+ \left\lceil  \sum_{j=1}^{m-1} \frac{d}{q^{jh}} \right\rceil,
$$
where $k \in {\mathbb N}$ and $m$ is such that
$$
q^{(m-2)h} < d \leqslant q^{(m-1)h} \leqslant q^r. 
$$
\end{cor}

\begin{proof}
Since
$$
\frac{1}{f(q,m)}=\frac{1}{q^{(m-1)h}}(1+q^h+\cdots+q^{(m-2)h})= \sum_{j=1}^{m-1} \frac{1}{q^{jh}},
$$
the bound follows from Theorem~\ref{addbound}.
\end{proof}

In the non-integral (fractional) case that $r/h \not\in {\mathbb N}$, we get a slightly weaker bound using the same reasoning as above.

\begin{cor} \label{addboundcor}
If there is a $[n,r/h,d]_{q}^h$ additive code then
$$
n \geqslant \lceil r/h \rceil +d-m+ \left\lceil  \sum_{j=1}^{m-2} \frac{d}{q^{jh}} \right\rceil,
$$
where $r=(k-1)h+r_0$, $k=\lceil r/h \rceil $, $1 \leqslant r_0 \leqslant h$, and $m$ is such that
$$
q^{(m-2)h+r_0} < d \leqslant q^{(m-1)h+r_0} \leqslant q^r.
$$
\end{cor}

\begin{proof}
Since
$$
\frac{1}{f(q,m)}>\frac{1}{q^{(m-2)h+r_0}} \frac{q^{(m-2)h+r_0}-q^{r_0}}{q^h-1}=\frac{1}{q^{(m-2)h}}(1+q^h+\cdots+q^{(m-3)h})= \sum_{j=1}^{m-2} \frac{1}{q^{jh}},
$$
the bound follows from Theorem~\ref{addbound}.
\end{proof}

\section{A second bound for additive codes over finite fields}

Inspired by the bound for $q=4$ from \cite{BMP2021} and \cite{GLLM2023}, in this section we prove another bound for additive codes over finite fields. We then show that this new bound is generally not as good as the bound in Theorem~\ref{addbound} for $d<q^r$.

\begin{theorem} \label{addbound2}
If there is a $[n,r/h,d]_q^h$ additive code then
$$
n \geqslant d+\frac{q-1}{q^h-1}\sum_{j=1}^{r-h} \lceil \frac{d}{q^j} \rceil.
$$
\end{theorem}

\begin{proof}
Let $\{e_1,\ldots,e_h\}$ be a basis for ${\mathbb F}_{q^h}$ over ${\mathbb F}_q$ and let $G$ be an $r \times n$ matrix over ${\mathbb F}_{q^h}$ whose ${\mathbb F}_q$ row span is an additive $[n,r/h,d]_q^h$ code. Then we can write
$$
G=\sum_{i=0}^h G_ie_i
$$
where $G_i$ is a $r \times n$ matrix over ${\mathbb F}_q$. 


Let $\hat{G}_j$, $j \in \{1,\ldots,(q^h-1)/(q-1)\}$, be the distinct elements of the projective subspace
$$
\langle G_1,\ldots,G_h\rangle.
$$
In other words, the matrix
$$
\hat{G}_j=\sum_{i=1}^h c_{ij}G_i
$$
for some point $c_j=(c_{1j},\ldots,c_{hj})$ in $PG(h-1,q)$. 

Consider the linear code generated by the $r \times (n(q^h-1)/(q-1))$ matrix
$$
\hat{G}=(\hat{G}_1,\ldots,\hat{G}_{(q^h-1)/(q-1)}).
$$

Let $u \in {\mathbb F}_q^r$. 

If the $m$-th coordinate of the codeword $uG$ is zero then the $m$-th coordinate of $uG_i$ is zero, for all $i \in \{1,\ldots,h\}$, and therefore the $m$-th coordinate of $u\hat{G}_j$ is zero for all $j \in \{1,\ldots,(q^h-1)/(q-1)\}$. 

Suppose that the $m$-th coordinate of the codeword $uG$ is non-zero. Then 
$$
a=(uG_1,\ldots,uG_h)
$$
is a non-zero vector of ${\mathbb F}_q^h$. If the point $c_j$ is orthogonal to $a$ then the $m$-th coordinate of $u\hat{G}_j$ is zero. Since $c_j$ runs through all points in $PG(h-1,q)$, the $m$-th coordinate of $u\hat{G}_j$ is zero for $(q^{h-1}-1)/(q-1)$ values of $j \in \{1,\ldots,(q^h-1)/(q-1)\}$ and therefore non-zero for $q^{h-1}$ coordinates.

Thus, if the weight of $uG$ is $w$ then weight of $u\hat{G}$ is $wq^{h-1}$. 

Therefore, $\hat{G}$ generates a $[\frac{(q^h-1)}{q-1}n,r,dq^{h-1}]$ linear code over ${\mathbb F}_q$.

Applying the Griesmer bound for linear codes gives
$$
\frac{(q^h-1)}{q-1}n \geqslant \sum_{j=0}^{r-1} \lceil \frac{dq^{h-1}}{q^j} \rceil=\frac{(q^h-1)}{q-1}d+\sum_{j=1}^{r-h} \lceil \frac{d}{q^j} \rceil .
$$
\end{proof}

\section{Comparing the bounds in Theorem~\ref{addbound} and Theorem~\ref{addbound2}}

In this section we will compare the bounds in Theorem~\ref{addbound} and Theorem~\ref{addbound2}.

\begin{theorem} \label{comparison}
If 
$d$ is such that
$$
q^{(m-2)h+r_0}  < d \leqslant q^{(m-1)h+r_0},
$$
for some $m \in \{2,\ldots,\lceil r/h\rceil\}$ and
$$
(\lceil r/h\rceil -m)(q^h-1) \geqslant (r-h)(q-1)+q^{h}-q
$$
then the bound in Theorem~\ref{addbound} is better than the bound in Theorem~\ref{addbound2}.
\end{theorem}

\begin{proof}
Let 
$$
b_1= \lceil r/h \rceil +d-m+ \left\lceil  \frac{d}{f(q,m)} \right\rceil,
$$
the bound in Theorem~\ref{addbound} and
let
$$
b_2= d+\frac{q-1}{q^h-1}\sum_{j=1}^{r-h} \lceil \frac{d}{q^j} \rceil,
$$
the bound in Theorem~\ref{addbound2}.

Since $d \leqslant q^{(m-1)h+r_0}$, and recalling that $r=(k-1)h+r_0$ and $k=\lceil r/h\rceil$,
$$
b_2 \leqslant d + ((k-m-1)h+1)\frac{q-1}{q^h-1}+\frac{q-1}{q^h-1}\sum_{j=1}^{(m-1)h+r_0-1} \lceil \frac{d}{q^j} \rceil.
$$
and so
$$
b_2 \leqslant d + (r-h)\frac{q-1}{q^h-1}+\frac{q-1}{q^h-1}\sum_{j=1}^{(m-1)h+r_0-1} \frac{d}{q^j}
$$
and
$$
b_2 \leqslant d + (r-h)\frac{q-1}{q^h-1}+\frac{d(q^{h-1}-1)}{q^{(m-1)h+r_0-1}(q^h-1)}+\frac{q-1}{q^h-1}\sum_{j=1}^{(m-2)h+r_0} \frac{d}{q^j}
$$
Meanwhile,
$$
b_1 \geqslant k+d-m+d\frac{(q^{(m-2)h+r_0}-1)}{(q^h-1)q^{(m-2)h+r_0}},
$$
and so
$$
b_1 \geqslant k+d-m+\frac{q-1}{q^h-1}\sum_{j=1}^{(m-2)h+r_0} \frac{d}{q^j}.
$$

If $b_1<b_2$ then
$$
k-m<(r-h)\frac{q-1}{q^h-1}+\frac{d(q^{h-1}-1)}{q^{(m-1)h+r_0-1}(q^h-1)}<
(r-h)\frac{q-1}{q^h-1}+\frac{q(q^{h-1}-1)}{(q^h-1)}.
$$

\end{proof}


\section{Bounds on additive MDS codes over finite fields}

Theorem \ref{addbound} has the following corollary for MDS codes.

\begin{theorem} \label{dbound}
If there is a $[n,r/h,d]_{q}^h$ additive MDS code then
$$
d \leqslant  q^{h}-1+\frac{q^h-1}{q^{r_0}-1}
$$
and
$$
n \leqslant  k -2+ q^{h}+\frac{q^h-1}{q^{r_0}-1},
$$
where $k=\lceil r/h\rceil$, $r=(k-1)h+r_0 >h$ and $1 \leqslant r_0 \leqslant h$.

\end{theorem}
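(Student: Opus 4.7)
The plan is to derive both inequalities as a direct corollary of the Griesmer-type bound in Theorem~\ref{addbound} by specialising it to MDS codes. Since the code is MDS, we have the Singleton identity $n = k + d - 1$, where $k = \lceil r/h \rceil$. Substituting this into the bound from Theorem~\ref{addbound} for any admissible $m$ yields
$$
k + d - 1 \;\geq\; k + d - m - 2 + \left\lceil \frac{d}{f(q,m)} \right\rceil,
$$
which simplifies to the clean inequality
$$
\left\lceil \frac{d}{f(q,m)} \right\rceil \;\leq\; m + 1.
$$
So the task reduces to choosing $m$ cleverly and unpacking what this inequality says about $d$.

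The natural choice is $m = 0$. With this choice the inequality becomes $\lceil d / f(q,0) \rceil \leq 1$, i.e., $d \leq f(q,0)$. Plugging in the definition,
$$
f(q,0) \;=\; \frac{q^{r_0}(q^h-1)}{q^{r_0}-1} \;=\; (q^h-1) + \frac{q^h-1}{q^{r_0}-1},
$$
by an elementary partial-fraction manipulation. This is exactly the claimed upper bound on $d$. The bound on $n$ then follows by adding $k-1$ on both sides and using $n = d + k - 1$ once more.

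The only technical subtlety, and the one step that needs attention, is verifying that $m = 0$ is indeed admissible in Theorem~\ref{addbound}, which requires $0 \leq m \leq k-2$, i.e., $k \geq 2$. This is where the hypothesis $r > h$ is used: since $r = (k-1)h + r_0$ with $1 \leq r_0 \leq h$, the assumption $r > h$ forces $(k-1)h \geq r - h > 0$, hence $k \geq 2$. So there is no obstacle: the main content of Theorem~\ref{dbound} is already packaged inside Theorem~\ref{addbound}, and all the work lies in picking the right $m$ and performing the algebraic simplification of $f(q,0)$. No case analysis in $m$ (i.e., no need to invoke the optimal $m_0$) is required, since $m = 0$ is already strong enough to yield a sharp bound once the MDS condition collapses $n - d$ to $k - 1$.
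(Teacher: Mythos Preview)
Your proof is correct and follows essentially the same approach as the paper: substitute the MDS identity $n=k+d-1$ into Theorem~\ref{addbound} to obtain $\lceil d/f(q,m)\rceil \leq m+1$, then set $m=0$ (allowed since $r>h$ forces $k\geq 2$) to get $d\leq f(q,0)$, and deduce the bound on $n$ from the Singleton relation. The only cosmetic addition is that you spell out the algebraic identity $f(q,0)=(q^h-1)+\frac{q^h-1}{q^{r_0}-1}$, which the paper leaves implicit.
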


\begin{proof}
An additive MDS code implies that $n=\lceil r/h \rceil +d-1$. Therefore, Theorem~\ref{addbound} implies
$$
\left \lceil  \frac{d}{f(q,m)} \right \rceil \leqslant m-1
$$
where
$$
f(q,m)=\frac{q^{(m-2)h+r_0}(q^h-1)}{q^{(m-2)h+r_0}-1}
$$
and for all $m \in \{2,\ldots,k\}$.

Since $r>h$ implies $k \geqslant 2$, we can set $m=2$ and the first inequality implies
$$
d \leqslant f(q,2).
$$
The bound on $n$ then follows by substituting in the (fractional) Singleton bound.
\end{proof}

We can improve on these bounds in the case that the MDS code is unfaithful.

\begin{theorem} \label{degbound}
If there is an unfaithful $[n,r/h,d]_{q}^h$ additive MDS code then
$$
d \leqslant  q^{h}+\frac{q^h-q^{r_0+1}}{q^{r_0}-1}
$$
and
$$
n \leqslant q^{h}+ k -1 +\frac{q^h-q^{r_0+1}}{q^{r_0}-1},
$$
where $r=(k-1)h+r_0 >h$ and $1 \leqslant r_0 \leqslant h$.
\end{theorem}

\begin{proof}
In the proof of Theorem \ref{addbound}, we have that $m=2$ and $s=k-2$, and at least one of the subspaces in $\mathcal X_1$ has dimension at most $h-2$. Thus, the inequality (\ref{eq1}) becomes
$$
(q^{r_0}-1)(n-(k-2)-1)+(q^{r_0+1}-1) \leqslant (q^{h+r_0}-1)(n-(k-2)-d).
$$
Since $d=n-k+1$, we have
$$
d \leqslant \frac{q^{h+r_0}-q^{r_0+1}}{q^{r_0}-1}=q^{h}+\frac{q^h-q^{r_0+1}}{q^{r_0}-1}.
$$
\end{proof}

Recall that in Theorem~\ref{dboundlinear} we had that $d \leqslant q^h$, so the bound in Theorem \ref{dbound} is the same when $r_0=h$ and slightly weaker when $r_0 \neq h$. 
In Theorem~\ref{kboundlinear}, we saw that the Griesmer bound allowed us to bound $k$ for linear MDS codes, as long we assumed that $n \geqslant k+2$, or equivalently, assuming $d \geqslant 2$. 
For additive MDS codes we obtain a similar bound, with roughly the same hypothesis.

\begin{theorem}    \label{kbound}
If there is an  $[n,k,d]_{q}^h$ additive MDS code $C$ for some $k \in {\mathbb N}$ and $n \geqslant k+2$ then
$$
k \leqslant  q^h-1.
$$
\end{theorem}

\begin{proof}

Since $n \geqslant k+2$, we can truncate the code $C$ to an MDS code $C_0$ of length $k+2$. 

Since $d=3$, any two codewords must be distinct on one of the first $k$ coordinates, thus there are sets $S_i={\mathbb F}_{q^h}$ and maps 
$f$ and $g$ from
$$
(S_1,\ldots,S_{k}) \rightarrow {\mathbb F}_{q^h}
$$
such that
$$
C_0=\{ (u_1,u_2,\ldots,u_{k},f(u_1,\ldots,u_{k}),g(u_1,\ldots,u_{k})) \ | \ u_i \in S_i \}.
$$

Since $C_0$ is additive ${\mathbb F}_q$-linear, the maps $f$ and $g$ must be additive ${\mathbb F}_q$-linear maps.

For each $i \in \{1,\ldots,k\}$, let
$$
\pi_i=\{(f(0,\ldots,0,u_i,0,\ldots,0),g(0,\ldots,0,u_i,0,\ldots,0)) \ | \ u_i \in S_i \}.
$$
Since $f$ and $g$ are additive ${\mathbb F}_q$ linear maps, $\pi_i$ is an ${\mathbb F}_q$ subspace of ${\mathbb F}_{q^h}^2$. 

For $i \neq j$, the distance between 
$$
u=(0,\ldots,0,u_i,0,\ldots,0,f(0,\ldots,0,u_i,0,\ldots,0),g(0,\ldots,0,u_i,0,\ldots,0))
$$ 
$$
v=(0,\ldots,0,v_j,0,\ldots,0,f(0,\ldots,0,v_j,0,\ldots,0),g(0,\ldots,0,v_j,0,\ldots,0))
$$
is at least $3$, which implies
$$
\pi_i \cap \pi_j = \{ 0\}.
$$
Since, $u$ has weight at least $3$, $\pi_i$ has trivial intersection with both
$$
\pi_0=\{(0,x) \ | \ x \in {\mathbb F}_{q^h} \}
$$
and
$$
\pi_{\infty}=\{(x,0) \ | \ x \in {\mathbb F}_{q^h} \}.
$$
for all $i \in \{1,\ldots,k\}$.

Therefore, we have that 
$$
\sum_{i=1}^{k} (|S_i|-1) \leqslant q^{2h}-2q^h+1.
$$
Since $|S_i|=q^h$, we have that
$$
k(q^h-1) \leqslant  q^{2h}-2q^h+1.
$$
\end{proof}

\section{The dual of an additive code}

The dual of an additive $[n,r/h,d]_{q}^h$ code $C$ is defined by
$$
C^{\perp}=\{ v \in {\mathbb F}_{q^h}^n \ | \ \mathrm{tr}_{q^h \mapsto q} (u \cdot v)=0, \ \mathrm{for} \ \mathrm{all} \ u \in C\},
$$
where $ \mathrm{tr}_{q^h \mapsto q}$ denotes the trace function from ${\mathbb F}_{q^h}$ to ${\mathbb F}_q$.
Since an ${\mathbb F}_q$-basis for $C$ defines $r$ equations for the ${\mathbb F}_q$-subspace $C^{\perp}$, the dual code is an additive $[n,n-r/h,d^{\perp}]_{q}^h$ code. Note that as an ${\mathbb F}_q$-vector space the vector space ${\mathbb F}_{q^h}^n$ has dimension $hn$ and so $|C^{\perp}|=q^{nh-r}$.

\begin{theorem} \label{degthm}
Let $C$ be an additive $[n,r/h,d]_{q}^h$ code, and let $d^\perp$ denote the minimal distance of $C^\perp$. Then $C$ is unfaithful if and only if $d^\perp=1$.
\end{theorem}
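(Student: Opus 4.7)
The plan is to translate both sides of the equivalence into concrete statements about the column structure of the generator matrix $G$, and then connect them via nondegeneracy of the trace form. Fix a generator matrix $G$ of $C$, let $G_j\in \mathbb F_{q^h}^r$ denote its $j$-th column, and let $W_j$ be the $\mathbb F_q$-subspace of $\mathbb F_{q^h}$ spanned by the $r$ entries of $G_j$. Writing $G$ out over a basis $\mathcal B$ as in Section~\ref{boundsec}, the $j$-th block of $\tilde G$ is an $r\times h$ matrix whose row space, read back through $\mathcal B$, is precisely $W_j$; hence its rank equals $\dim_{\mathbb F_q} W_j$, which is one more than the projective dimension of the $j$-th element of $\mathcal X(C)$. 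Consequently $C$ is unfaithful if and only if $W_j\neq \mathbb F_{q^h}$ for some $j$.

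Next I characterise the weight-one vectors of $C^\perp$. A weight-one vector supported at position $j$ with nonzero value $v_j\in \mathbb F_{q^h}$ lies in $C^\perp$ iff $\mathrm{tr}_{q^h\mapsto q}(v_j\cdot aG_j)=0$ for every $a\in \mathbb F_q^r$; expanding and using $\mathbb F_q$-linearity, this is equivalent to $\mathrm{tr}_{q^h\mapsto q}(v_j x)=0$ for every $x\in W_j$. Hence $d^\perp=1$ iff some $W_j$ admits a nonzero element in its trace-orthogonal complement.

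To close the loop, I appeal to the fact that $(x,y)\mapsto \mathrm{tr}_{q^h\mapsto q}(xy)$ is a nondegenerate symmetric $\mathbb F_q$-bilinear form on $\mathbb F_{q^h}$, so the trace-orthogonal complement of $W_j$ has $\mathbb F_q$-dimension $h-\dim_{\mathbb F_q} W_j$. This is positive exactly when $W_j$ is a proper subspace of $\mathbb F_{q^h}$. Combining with the first paragraph yields the desired equivalence. The only mildly delicate point is matching the two pictures correctly: identifying the row span of the $j$-th block of $\tilde G$ with $W_j$ (independently of the choice of $\mathcal B$), and remembering that the projective dimension in $\mathcal X(C)$ comes from the \emph{column} rank of that same block. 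Once this book-keeping is settled, the statement is a clean application of trace-duality on $\mathbb F_{q^h}/\mathbb F_q$.
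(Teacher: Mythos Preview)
Your proof is correct and follows essentially the same approach as the paper: both show that a block of $\tilde G$ has rank less than $h$ precisely when a weight-one vector exists in $C^\perp$, with the underlying mechanism being the nondegeneracy of the trace form. Your version is more explicit than the paper's, which asserts the key step (``columns linearly dependent $\Rightarrow$ nonzero $x$ with $(x,0,\ldots,0)\in C^\perp$'') without spelling out the trace-duality argument; introducing $W_j$ makes this transparent.
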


\begin{proof}
Recall that in Section~\ref{boundsec}, we defined a multiset of subspaces $\mathcal X(C)$ of PG$(r-1,q)$ from an additive code $C$ of size $q^r$, corresponding to the blocks of size $h$ of the expanded matrix $\tilde G\in {\mathbb F}_q^{r\times nh}$ obtained from the generator matrix $G\in {\mathbb F}_{q^h}^{r\times n}$ of $C$.
Without loss of generality we may assume that the first block of $h$ columns of $\tilde G$ spans a subspace of dimension at most $h-2$. Then the rank of this $r \times h$ matrix is strictly smaller than $h$ and so its columns are linearly dependent. This implies that there exists a nonzero element $x \in {\mathbb F}_{q^h}$ such that $(x,0,\ldots,0) \in C^{\perp}$.

To prove the other direction, assume $d^\perp=1$. Without loss of generality assume that 
$(x,0,\ldots,0) \in C^{\perp}$,
with $x \in {\mathbb F}_{q^h}\setminus\{0\}$.
This implies that the first block of $\tilde G$ consist of $h$ columns which are linearly dependent, and therefore the corresponding element of $\mathcal X(C)$ has dimension at most $h-2$. Hence
$C$ is unfaithful.
\end{proof}

\begin{theorem}
Let $C$ be an additive $[n,r/h,d]_{q}^h$ code, and let $d^\perp$ denote the minimal distance of $C^\perp$. 
Then $C$ is faithful with $d \geqslant 2$ if and only if $C^{\perp}$ is faithful with $d^{\perp} \geqslant 2$.
\end{theorem}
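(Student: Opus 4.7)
The plan is to reduce the claim to two applications of Theorem~\ref{degthm}, after first recording the standard double-dual identity $C^{\perp\perp}=C$ for additive codes under the trace inner product. Given this, Theorem~\ref{degthm} applied to $C$ says $C$ is faithful iff $d^\perp\geqslant 2$, while Theorem~\ref{degthm} applied to $C^\perp$ (which is again an additive code over $\mathbb F_{q^h}$, linear over $\mathbb F_q$) says $C^\perp$ is faithful iff the minimum distance of $(C^\perp)^\perp=C$ is at least $2$, i.e.\ iff $d\geqslant 2$. Combining these, both sides of the biconditional in the statement become the single condition $d\geqslant 2$ and $d^\perp\geqslant 2$.

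Concretely, first I would verify $C^{\perp\perp}=C$. The trace pairing $(u,v)\mapsto \mathrm{tr}_{q^h\mapsto q}(u\cdot v)$ on $\mathbb F_{q^h}^n$ is a non-degenerate $\mathbb F_q$-bilinear form, since the trace is non-degenerate and the standard inner product is non-degenerate over $\mathbb F_{q^h}$. Hence for any $\mathbb F_q$-subspace $C$ one has $\dim_{\mathbb F_q}C+\dim_{\mathbb F_q}C^\perp=nh$ and $C\subseteq C^{\perp\perp}$; applying the dimension formula once more to $C^\perp$ gives $\dim_{\mathbb F_q}C^{\perp\perp}=\dim_{\mathbb F_q}C$, so $C^{\perp\perp}=C$. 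In particular, $C^\perp$ is an additive $[n,(nh-r)/h,d^\perp]_q^h$ code whose own dual is $C$.

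Now I would carry out the two implications. For the forward direction, assume $C$ is faithful with $d\geqslant 2$. Faithfulness of $C$ gives $d^\perp\geqslant 2$ by Theorem~\ref{degthm}. Applying Theorem~\ref{degthm} to $C^\perp$, whose dual is $C$ with $d\geqslant 2$, yields that $C^\perp$ is faithful; together with $d^\perp\geqslant 2$ this is exactly the right-hand side. For the converse, assume $C^\perp$ is faithful with $d^\perp\geqslant 2$. Theorem~\ref{degthm} applied to $C^\perp$ gives that the minimum distance of $(C^\perp)^\perp=C$ is at least $2$, i.e.\ $d\geqslant 2$; and faithfulness of $C$ follows from Theorem~\ref{degthm} applied to $C$ together with $d^\perp\geqslant 2$.

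There is no real obstacle: once $C^{\perp\perp}=C$ is in hand, the statement is a direct symmetric consequence of Theorem~\ref{degthm}. The only mild subtlety worth spelling out in the write-up is that Theorem~\ref{degthm} is applicable to $C^\perp$, which requires noting that the dual is again an additive $\mathbb F_q$-linear code over $\mathbb F_{q^h}$ of the form covered by the definition of (un)faithfulness.
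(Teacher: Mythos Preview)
Your proposal is correct and follows essentially the same approach as the paper: both reduce the statement to Theorem~\ref{degthm} applied once to $C$ and once to $C^{\perp}$, using the symmetry $C^{\perp\perp}=C$. Your write-up is in fact more complete than the paper's, which leaves the double-dual identity and the $d\geqslant 2$ half of one implication implicit.
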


\begin{proof}
If $C$ is unfaithful then by Theorem~\ref{degthm}, $d^{\perp}=1$. Hence $C$ is faithful. The other direction follows by replacing $C$ with $C^{\perp}$.
\end{proof}
The following is \cite[Theorem 4.3]{BGL2023}, see also \cite[Theorem 3.3]{YS2024}.

\begin{theorem} \label{integraldual}
The dual of an integral additive MDS code is an additive MDS code.
\end{theorem}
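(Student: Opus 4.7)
The plan is to mirror the classical linear MDS--duality argument, with the usual inner product replaced by the trace--bilinear form defining $C^\perp$. Since $C$ is integral, $r=kh$ with $k=\lceil r/h\rceil$, so $|C|=q^{kh}$ and the MDS condition becomes $d=n-k+1$. The dual has size $q^{nh-kh}=q^{h(n-k)}$ and so is likewise integral of ${\mathbb F}_{q^h}$-dimension $n-k$; the Singleton bound therefore gives $d^\perp\leqslant k+1$, and the entire content of the theorem is the matching lower bound $d^\perp\geqslant k+1$.

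First I would record the standard consequence of the MDS property of $C$: for every set $T$ of $k$ coordinates the projection $\pi_T:C\to {\mathbb F}_{q^h}^T$ is a bijection. Injectivity is immediate, because two codewords agreeing on $k$ coordinates would differ in a nonzero codeword of weight at most $n-k<d$; surjectivity then follows from the cardinality identity $|C|=|{\mathbb F}_{q^h}^T|=q^{kh}$. Geometrically this is just the statement, visible from Theorem~\ref{thm:proj_system}, that in the projective system ${\mathcal X}(C)$ any $k$ of the $n$ subspaces of dimension $h-1$ jointly span ${\mathrm{PG}}(kh-1,q)$.

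Next, suppose for contradiction that there is a nonzero $v\in C^\perp$ of weight at most $k$, and enlarge its support to a coordinate set $T$ with $|T|=k$. The defining relation $\mathrm{tr}_{q^h\mapsto q}(u\cdot v)=0$ for all $u\in C$ reduces to $\mathrm{tr}_{q^h\mapsto q}(u|_T\cdot v|_T)=0$ for all $u\in C$, and by the bijectivity of $\pi_T$ the restriction $u|_T$ sweeps out all of ${\mathbb F}_{q^h}^T$. Taking $u|_T=a\,e_i$ for $a\in{\mathbb F}_{q^h}$ and each $i\in T$, and invoking non-degeneracy of the trace pairing ${\mathbb F}_{q^h}\times {\mathbb F}_{q^h}\to {\mathbb F}_q$, every entry $v_i$ is forced to vanish, contradicting $v\neq 0$. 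Hence $d^\perp\geqslant k+1$, and combined with Singleton we get $d^\perp=k+1$, so $C^\perp$ is MDS.

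The only mildly non-routine point is the replacement of the standard inner product by the trace form in the last step; once its non-degeneracy is invoked, the argument is a direct transcription of the classical linear case. This is also precisely where the integrality hypothesis enters: Theorem~\ref{dualfrac} shows that the analogue fails in the fractional setting, since the dual of a fractional code is automatically unfaithful and therefore has minimum distance $1$ by Theorem~\ref{degthm}.
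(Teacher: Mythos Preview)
Your argument is correct. The paper itself does not supply a proof of this theorem; it simply cites \cite[Theorem~4.3]{BGL2023} and \cite[Theorem~3.3]{YS2024}. Your proof---reducing to surjectivity of the projection onto any $k$ coordinates and then invoking non-degeneracy of the trace form---is the standard one and is essentially what those references contain.

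One small remark on your closing paragraph: the appeal to Theorem~\ref{dualfrac} is slightly off-target. That theorem concerns fractional subcodes of \emph{linear} codes being unfaithful, not duals of arbitrary fractional MDS codes. The genuine obstruction to extending your argument to the fractional case is already visible inside your own proof: when $r<kh$ the projection $\pi_T$ onto $k$ coordinates is still injective but can no longer be surjective, since $|C|=q^r<q^{kh}=|{\mathbb F}_{q^h}^T|$, and so you cannot realise $u|_T=a\,e_i$ for arbitrary $a$. That is exactly where integrality enters.
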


In the fractional case, as pointed out in \cite[Example 3.1]{YS2024} the dual of a fractional MDS code is not necessarily MDS. Here, we aim to give a precise condition on when the dual of a fractional MDS code is MDS.

\begin{definition}
Let $J$ be a subset of $\{1,\ldots,n\}$. The {\em geometric quotient at $J$} of an additive $[n,r/h,d]_{q}^h$ code $C$  is a code $C/J$ of length $n-|J|$ defined as
$$
C/J=\{ u_J \ | \ u \in C \mbox{, and } u_j=0, \ \mathrm{for}\ \mathrm{all} \ j \in J\} \subseteq {\mathbb{F}}_{q^h}^{n-|J|},
$$
where $u_J$ denotes the vector obtained from $u$ by deleting the coordinate positions belonging to the subset $J$. We say a geometric quotient is {\em non-obliterating} if $|C/J|\geqslant q^{h}$. Note that an obliterating quotient always yields an unfaithful additive code.

\end{definition}
The code $C/J$ is sometimes referred to as a projection or a shortening of the code $C$, which can be confused with puncturing or truncation. The motivation for calling it the geometric quotient is the following theorem. This theorem generalises \cite[Lemma 3.11]{AB2023} from the integral case that $r/h \in {\mathbb N}$ to the fractional case. 

\begin{theorem}
Let $C$ be an additive $[n,r/h,d]_{q}^h$ code with projective system $\mathcal X(C)$. If $J$ is a subset of $\{1,\ldots,n\}$ and $U_J$ is the subspace of ${\mathrm{PG}}(r-1,q)$ spanned by the elements of $\mathcal X(C)$ corresponding to the positions in $J$, then $C_J$ is an additive code of length $n_J=n-|J|$, with ${\mathbb{F}}_q$-dimension $r_J=r-\dim U_J$, and with projective system in 
 the quotient geometry 
 $${\mathrm{PG}}(r-1,q)/U_J\cong {\mathrm{PG}}(r-\dim U_J -1,q)$$ 
 consisting of the subspaces $\langle A,U_J\rangle/U_J$, $A\in {\mathcal{X}}(C)\setminus U_J$.
\end{theorem}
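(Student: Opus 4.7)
The plan is to work directly from a generator matrix. Fix $G\in {\mathbb F}_{q^h}^{r\times n}$ generating $C$, with ${\mathbb F}_q$-expansion $\tilde G\in {\mathbb F}_q^{r\times nh}$ as in Section~\ref{boundsec}, and for each $j$ let $A_j\subseteq {\mathbb F}_q^r$ be the ${\mathbb F}_q$-column span of the $j$-th block of $\tilde G$, so that $A_j$ is the vector subspace underlying the element of $\mathcal X(C)$ at position $j$, and let $W=\sum_{j\in J}A_j$, the vector subspace underlying $U_J$.

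The observation that drives everything is that, for $a\in {\mathbb F}_q^r$, the codeword $aG$ vanishes in position $j$ if and only if $a\in A_j^\perp$. So the codewords with $u_j=0$ for all $j\in J$ correspond bijectively, via $a\mapsto aG$, to the elements of $W^\perp=\bigcap_{j\in J}A_j^\perp$, and $C_J$ is obtained from these codewords by deleting the coordinates in $J$. In particular $C_J$ has length $n-|J|$ by construction.

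I would then verify that the map $\phi\colon W^\perp\to C_J$ sending $a$ to $aG$ with the positions in $J$ deleted is an isomorphism of ${\mathbb F}_q$-vector spaces. It is surjective by definition; an element of $\ker\phi$ is orthogonal to $A_j$ for all $j\in J^c$ in addition to all $j\in J$, hence lies in $\bigl(\sum_{j=1}^nA_j\bigr)^\perp=\{0\}$, using that $\tilde G$ has ${\mathbb F}_q$-rank $r$. This gives $\dim_{{\mathbb F}_q}C_J=\dim W^\perp=r-\dim U_J=r_J$.

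Finally, for the projective system of $C_J$, I would pick a basis $a_1,\ldots,a_{r_J}$ of $W^\perp$ as the rows of a generator matrix of $C_J$; the $j$-th block (for $j\in J^c$) of its ${\mathbb F}_q$-expansion has column space $\{(a_i(v))_{i=1}^{r_J} : v\in A_j\}\subseteq {\mathbb F}_q^{r_J}$. The main technical step is to identify this with the image of $A_j$ under the quotient map $V\to V/W$ (where $V={\mathbb F}_q^r$); the key fact is that $a_1,\ldots,a_{r_J}$, being a basis of $W^\perp$, descends via the standard pairing to a basis of $(V/W)^*$, so the assignment $v\mapsto (a_i(v))_i$ induces an isomorphism $V/W\cong {\mathbb F}_q^{r_J}$ under which the column space above corresponds exactly to the image of $A_j$. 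Projectively this image is $\langle A_j, U_J\rangle/U_J$, and blocks with $A_j\subseteq W$ collapse to the zero subspace in the quotient, matching the indexing by $A\in \mathcal X(C)\setminus U_J$ in the statement.
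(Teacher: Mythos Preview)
Your proof is correct and follows essentially the same approach as the paper's: both rest on the observation that $(aG)_j=0$ if and only if $a$ annihilates the block span $A_j$, and then identify $\mathcal X(C/J)$ with the images of the remaining $A_j$ in the quotient by $U_J$. The only difference is organizational: the paper treats the case $|J|=1$ first (building an explicit column map $\eta$ into the quotient ${\mathbb F}_q^r/\chi(v_j)$) and then iterates, whereas you handle arbitrary $J$ in one pass via $W^\perp$; your choice of a basis of $W^\perp$ to coordinatise $V/W$ is exactly the dual description of the paper's $\eta$.
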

\begin{proof}
Let $C$ be an additive $[n,r/h,d]_{q}^h$ code with projective system $\mathcal X(C)$, and with generator matrix $G$ consisting of columns $v_1,\ldots, v_n\in {\mathbb{F}}_{q^h}^r$. 

First suppose $J=\{j\}$. Consider any nonzero codeword $aG$ of $C$, $a\in {\mathbb{F}}_{q}^r$. Then $aG\in C/J$ if and only if $v_j\in a^\perp$.
Let $\chi$ denote the extension of the map used to map the columns of $G$ to the elements of ${\mathcal{X}}(C)$, with respect to some fixed ${\mathbb{F}}_q$-basis for ${\mathbb{F}}_{q^h}$. Then $\chi(v_j)\in {\mathcal{X}}(C)$ is a subspace of ${\mathbb{F}}_{q}^r$ of dimension $m_j\leqslant h$, and since $a\in {\mathbb{F}}_{q}^r$, the condition $v_j\in a^\perp$ is equivalent to the condition that $a^\perp$ contains $\chi(v_j)$.
So the code $C/J$  is an additive code of dimension $r-m_j$ over ${\mathbb{F}}_q$, and its codewords are of the form $aG$ where $a^\perp$ contains $\chi(v_j)$, i.e. $a^\perp + \chi(v_j)$ is a hyperplane in the quotient space ${\mathbb{F}}_{q}^r/\chi(v_j)$. Consider the map
$$
\eta~:~{\mathbb{F}}_{q^h}^r\rightarrow {\mathbb{F}}_{q^h}^{r-m_j}~:~v \mapsto \chi^{-1}(\varphi(\chi(v)+\chi(v_j)))
$$
where $\varphi$ denotes the natural isomorphism ${\mathbb{F}}_{q}^r/\chi(v_j)\cong {\mathbb{F}}_q^{r-m_j}$, and let $G_J$ denote the $(r-m_j)\times (n-1)$ matrix with columns $\eta(v_i)$, $i\neq j$. Then $G_J$ is a generator matrix for $C_J$, and the multiset ${\mathcal{X}}(C_J)$ consists of the subspaces $\langle \chi(v_i),\chi(v_j)\rangle$, $i\neq j$, in the quotient of ${\mathrm{PG}}(r-1,q)$ by $\chi(v_j)$.
The case $|J|>1$ follows by repeating the argument $|J|$ times.
\end{proof}

\begin{theorem}
Let $C$ be an additive MDS code.
The dual code $C^{\perp}$ is an additive faithful MDS code if and only if every non-obliterating geometric quotient of $C$ is faithful.
\end{theorem}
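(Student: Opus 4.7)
The plan is to translate both sides of the equivalence into statements about the projective system ${\mathcal X}(C) = \{\chi(v_1), \ldots, \chi(v_n)\}$ and then exploit the MDS property of $C$. By the description of the geometric quotient in the previous theorem, $C/J$ is faithful if and only if $\chi(v_i) \cap U_J = \{0\}$ for every $i \in \bar J$, where $U_J = \sum_{j \in J} \chi(v_j) \subseteq {\mathbb F}_q^r$. Dually, expanding the trace pairing in a self-dual ${\mathbb F}_q$-basis of ${\mathbb F}_{q^h}$ identifies $C^\perp$ with the ${\mathbb F}_q$-orthogonal of the row space of $\tilde G$, and a short computation shows that, when $C$ is faithful, there is an ${\mathbb F}_q$-linear bijection between $C^\perp$ and the space of tuples $(y_j) \in \bigoplus_{j=1}^n \chi(v_j)$ satisfying $\sum_j y_j = 0$, under which $\mathrm{supp}(w)$ corresponds to $\{j : y_j \neq 0\}$. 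Faithfulness of $C$ is automatic under either side of the equivalence (from the quotient at $J = \emptyset$ in one direction and from $d(C) = n - k + 1 \geqslant 2$ together with Theorem~\ref{degthm} in the other), so $d^\perp$ equals the smallest $|J|$ for which $\{\chi(v_j) : j \in J\}$ fails to be linearly independent. Together with the fractional Singleton bound this yields the key translation: $C^\perp$ is faithful MDS if and only if every family of $k-1$ members of ${\mathcal X}(C)$ is linearly independent in the fractional case ($r_0 < h$), respectively every family of $k$ members in the integral case ($r_0 = h$).

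For the forward direction, I will suppose $C^\perp$ is faithful MDS and take a non-obliterating $J$, so $\dim U_J \leqslant r - h$. The translation above forces $\dim U_J = |J| h$ within the independence range, and combining this with $r - h = (k-2)h + r_0 < (k-1)h$ in the fractional case (respectively $r - h = (k-1)h$ in the integral case) yields $|J| \leqslant k - 2$ (fractional) or $|J| \leqslant k - 1$ (integral). In either case, for any $i \in \bar J$ the enlarged family $\{\chi(v_j) : j \in J\} \cup \{\chi(v_i)\}$ remains within the independence range and is therefore linearly independent, so $\chi(v_i) \cap U_J = \{0\}$ and hence $C/J$ is faithful.

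For the converse, the integral case is immediate from Theorem~\ref{integraldual}. In the fractional case I will argue by contraposition: if $C^\perp$ is not MDS, the translation produces $J$ with $|J| = k - 1$ and $\dim U_J < (k - 1)h$, hence some $\ell \in J$ with $\chi(v_\ell) \cap U_{J \setminus \{\ell\}} \neq \{0\}$. Setting $J' = J \setminus \{\ell\}$ gives $\dim U_{J'} \leqslant (k - 2)h \leqslant r - h$, so $J'$ is non-obliterating, yet $C/J'$ is unfaithful at the position $\ell \in \bar{J'}$, contradicting the hypothesis. The main obstacle is the first step, namely the geometric description of $d^\perp$ as a linear-dependency count among the subspaces $\chi(v_j)$; once that translation is established, the non-obliterating condition precisely separates the independent range from the dependency range, and both directions follow from a short case analysis in $r_0$.
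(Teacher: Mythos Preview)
Your argument is correct and follows the same route as the paper: both directions hinge on identifying weight-$m$ codewords of $C^{\perp}$ with linear dependencies among $m$ of the subspaces $\chi(v_j)$, and then on matching the non-obliterating condition $\dim U_J \leqslant r-h$ against the independence range $|J|\leqslant\lfloor r/h\rfloor$. You package this identification explicitly as a bijection $C^{\perp}\cong\{(y_j)\in\bigoplus_j\chi(v_j):\sum_j y_j=0\}$ and a ``key translation'' before applying it, while the paper argues directly with the blocks of $\tilde G$ by contradiction; the mathematical content is identical (one small caveat: a self-dual ${\mathbb F}_q$-basis of ${\mathbb F}_{q^h}$ need not exist, but the ordinary dual basis works in its place without change).
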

\begin{proof}
Let $C$ be an additive $[n,r/h,d]_{q}^h$ code, where $d=n-\lceil r/h \rceil+1$.
Suppose that every non-obliterating geometric quotient of $C$ is faithful and $C^{\perp}$ is not a faithful MDS code. If $C^{\perp}$ is unfaithful then by Theorem~\ref{degthm}, $C$ has minimum distance $1$, which contradicts the fact that $C$ is MDS.
Thus, it follows that $C^{\perp}$ is not MDS. Let $\tilde G$ be the $r \times hn$ matrix with entries in $\mathbb F_q$ obtained from a generator matrix $G$ for $C$ as before.
Since $C^\perp$ is not MDS there is a codeword $u  \in C^{\perp}$ of weight at most $m \leqslant \lfloor r/h \rfloor$. 
Consider the restriction of $\tilde G$ to these $m$ blocks of $h$ columns, and let $\pi_1,\ldots,\pi_m$ denote the corresponding elements of $\mathcal X(C)$ (subspaces of PG$(r-1,q)$). 
Since $u \in C^{\perp}$, at least one of the $\pi_i$ meets the span 
$$\Sigma= \langle \pi_j:j\in J\rangle, ~~ J=\{1,\ldots,m\}\setminus \{i\}$$ 
of the remaining elements nontrivially. Since $\Sigma$ has dimension at most $(m-1)h-1$, 
and $r-(m-1)h\geq h$, the quotient $C/J$ is non-obliterating.
Moreover the quotient $\pi_i/\Sigma$ has dimension at most $h-2$.
Thus, we have a non-obliterating quotient $C/J$ which is unfaithful.

Conversely, suppose $C^{\perp}$ is an additive faithful MDS code and that there is a non-obliterating geometric quotient $C/J$ of $C$, which is unfaithful. Suppose $J$ has size $m-1$.

Since the  geometric quotient  is non-obliterating $m-1 \leqslant \lfloor r/h \rfloor-1$. Since the  geometric quotient  is unfaithful, there is a further coordinate which together with the coordinates corresponding to $J$ give $m$ blocks whose $mh$ columns do not span an $mh$-dimensional subspace of ${\mathbb{F}}_q^r$. This implies that there is a codeword of weight $m\leqslant \lfloor r/h \rfloor$ in $C^{\perp}$, contradicting the assumption that $C^{\perp}$ is MDS.
\end{proof}

\section{Constructions of additive MDS codes over finite fields}

In the next theorem we will construct some additive MDS codes which prove that the bound in Theorem~\ref{dbound} is attainable if $r_0$ divides $h$ and $k=2$. Observe that Theorem~{\ref{dboundlinear} implies that for linear codes with $k=2$,
$$
n \leqslant q^h+1,
$$
so the following construction exceeds the bound for linear codes.

\begin{theorem} \label{k2cont} 
If $r_0$ divides $h$ then there  is a $[n,1+(r_0/h),n-1]_{q}^h$ additive MDS code where
$$
n=q^{h}+\frac{q^h-1}{q^{r_0}-1}.
$$
\end{theorem}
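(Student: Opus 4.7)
The plan is to invoke Theorem~\ref{thm:proj_system} and construct the required code by constructing a projective $h$-$(n,h+r_0,n-1)_q$ system. For an additive MDS code with $k=2$ and $d=n-1$, every hyperplane of $\mathrm{PG}(h+r_0-1,q)$ must contain at most one element of ${\mathcal{X}}(C)$, i.e. any two elements of ${\mathcal{X}}(C)$ must span the whole ambient space. If we restrict to faithful codes, the elements of ${\mathcal{X}}(C)$ have projective dimension exactly $h-1$, so by the Grassmann formula the pairwise span condition is equivalent to requiring that any two of these $(h-1)$-dimensional subspaces intersect in projective dimension at most $h-r_0-1$.

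The key idea is to exploit the hypothesis $r_0 \mid h$. Write $h=sr_0$, fix an identification
$$
{\mathbb F}_q^{h+r_0} \;\cong\; {\mathbb F}_{q^{r_0}}^{s+1},
$$
and take as $\mathcal S$ the collection of all $\mathbb F_{q^{r_0}}$-hyperplanes of ${\mathbb F}_{q^{r_0}}^{s+1}$, viewed as ${\mathbb F}_q$-subspaces of dimension $sr_0=h$ in ${\mathbb F}_q^{h+r_0}$. Equivalently, $\mathcal S$ is the set of points of the dual projective space ${\mathrm{PG}}(s,q^{r_0})^*$, re-read as a family of $(h-1)$-dimensional projective subspaces of ${\mathrm{PG}}(h+r_0-1,q)$.

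It then remains to verify three things. First, the cardinality: the number of $\mathbb F_{q^{r_0}}$-hyperplanes of $\mathbb F_{q^{r_0}}^{s+1}$ equals
$$
\frac{q^{r_0(s+1)}-1}{q^{r_0}-1}=\frac{q^{h+r_0}-1}{q^{r_0}-1}=q^h+\frac{q^h-1}{q^{r_0}-1}=n.
$$
Second, the pairwise span: two distinct $\mathbb F_{q^{r_0}}$-hyperplanes of $\mathbb F_{q^{r_0}}^{s+1}$ meet in an $\mathbb F_{q^{r_0}}$-subspace of dimension $s-1$, which as an $\mathbb F_q$-space has dimension $(s-1)r_0=h-r_0$ and so projective $\mathbb F_q$-dimension $h-r_0-1$; by Grassmann, their ${\mathbb F}_q$-span has projective dimension $2(h-1)-(h-r_0-1)=h+r_0-1$, so they span the whole ambient space and lie in no common hyperplane. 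Third, the existence of a hyperplane of ${\mathrm{PG}}(h+r_0-1,q)$ containing exactly one element of $\mathcal S$: any element of $\mathcal S$ has projective dimension $h-1<h+r_0-1$, hence is contained in some hyperplane, and that hyperplane cannot contain a second element by the preceding step.

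I do not expect any real obstacle: the only non-routine step is spotting the right construction, namely that, precisely when $r_0\mid h$, the bound in Theorem~\ref{dbound} for $k=2$ is met on the nose by the dual of the trivial projective space ${\mathrm{PG}}(h/r_0,q^{r_0})$ transported to an ${\mathbb F}_q$-geometry via field reduction. Once this is identified, the counting and intersection calculations above are short and formal, and the MDS property then follows immediately from Theorem~\ref{thm:proj_system} together with $n=d+k-1$.
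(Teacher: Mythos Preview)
Your proposal is correct. Both your argument and the paper's rest on the same underlying idea: exploit the hypothesis $r_0\mid h$ by viewing $\mathbb F_q^{h+r_0}$ as an $(s{+}1)$-dimensional space over $\mathbb F_{q^{r_0}}$ (with $s=h/r_0$) and use the structure of $\mathrm{PG}(s,q^{r_0})$ after field reduction. The executions differ, however. The paper builds the code explicitly: coordinates are indexed by the \emph{points} of $\mathrm{PG}(s,q^{r_0})$, codewords are parametrised by $x\in\mathbb F_{q^{h+r_0}}$, and the entries are assembled from $\mathbb F_{q^{r_0}}$-trace values; the minimum distance is then verified by observing that the trace conditions cut out a unique point. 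You instead invoke Theorem~\ref{thm:proj_system} directly and take the projective system to be the \emph{hyperplanes} of $\mathrm{PG}(s,q^{r_0})$, read as $(h-1)$-dimensional $\mathbb F_q$-subspaces; the spanning condition then drops out of a one-line Grassmann calculation. Your route is shorter and makes transparent why the bound of Theorem~\ref{dbound} is met exactly when $r_0\mid h$; the paper's route has the advantage of producing explicit codewords, which feeds naturally into the analogous construction in Theorem~\ref{k3cont}.
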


\begin{proof}
Let $A$ be the set of representatives for ${\mathbb F}_{q^{h+r_0}}/{\mathbb F}_{q^{r_0}}$, i.e. each $a \in A$ corresponds to a point of PG$(h/r_0,q^{r_0})$. 

Let $w$ be a primitive element of ${\mathbb F}_{q^{h+r_0}}$ and let $\{1,\alpha,\ldots,\alpha^{h/r_0-1} \}$ be a basis for ${\mathbb F}_{q^h}$ over ${\mathbb F}_{q^{r_0}}$.

For each $x \in {\mathbb F}_{q^{h+r_0}}$, we define a codeword whose coordinates are indexed by $a \in A$ and whose entry is
$$
\sum_{j=0}^{h/r_0-1}\mathrm{tr} (xaw^j) \alpha^j
$$
where $\mathrm{tr}$ is the trace from ${\mathbb F}_{q^{h+r_0}}$ to ${\mathbb F}_{q^{r_0}}$.

Thus $|C|=q^{h+r_0}$ and $n=(q^{h+r_0}-1)/(q^{r_0}-1)$, so it suffices to prove that $d=n-1$.

Equivalently, we need to prove that there is a unique $a$ such that 
$$
0=\sum_{j=0}^{h/r_0-1}\mathrm{tr} (xaw^j) \alpha^j
$$
Since $\{1,\alpha,\ldots,\alpha^{h/r_0-1} \}$ are linearly independent over ${\mathbb F}_{q^{r_0}}$ it follows that for all $j \in \{0,\ldots, h/r_0-1\}$,
$$
0=\mathrm{tr} (xaw^j).
$$
These equations define $h/r_0$ independent hyperplanes in PG$(h/r_0,q^{r_0})$, so there is a unique point which is in the intersection of these hyperplanes.
\end{proof}

A similar construction can be used to prove that the bound in Theorem~\ref{dbound} is also attainable when $k=3$ and $q=2$.

\begin{theorem} \label{k3cont} 
There  is a $[2^{h+1},2+(1/h),d]_{2}^h$ additive MDS code.
\end{theorem}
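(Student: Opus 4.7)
My plan is to build the desired code as a one-coordinate extension of the code $C_0$ supplied by Theorem~\ref{k2cont} with $q = 2$ and $r_0 = 1$. That code $C_0$ has parameters $[2^{h+1}-1,\,1 + 1/h,\,2^{h+1}-2]_{2}^h$, with codewords $c_x(a) = \sum_{j=0}^{h-1}\mathrm{tr}(xaw^j)\alpha^j$ indexed by $x \in \mathbb{F}_{2^{h+1}}$ and coordinates $a \in A_0 := \mathbb{F}_{2^{h+1}}^{*}$, where $w$ is primitive in $\mathbb{F}_{2^{h+1}}$, $\{1,\alpha,\ldots,\alpha^{h-1}\}$ is an $\mathbb{F}_2$-basis of $\mathbb{F}_{2^h}$, and $\mathrm{tr}$ denotes the trace from $\mathbb{F}_{2^{h+1}}$ to $\mathbb{F}_2$. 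I would define
$$
C \ = \ \{\,(c_x + \lambda\bar 1,\ \lambda)\ :\ x \in \mathbb{F}_{2^{h+1}},\ \lambda \in \mathbb{F}_{2^h}\,\},
$$
where $\bar 1 = (1,\ldots,1) \in \mathbb{F}_{2^h}^{2^{h+1}-1}$ and the last entry is a new coordinate. Then $n = 2^{h+1}$, and the parameterization $(x,\lambda)\mapsto (c_x + \lambda\bar 1, \lambda)$ is $\mathbb{F}_2$-linear and injective (read $\lambda$ off the last coordinate, then recover $x$ from $c_x$), so $C$ is additive of size $|C| = 2^{2h+1}$, with $r/h = 2 + 1/h$ and $\lceil r/h\rceil = 3$.

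The crux is to show $d(C) \geqslant 2^{h+1}-2$. A nonzero codeword $c' = (c_x+\lambda\bar 1,\lambda)$ with $\lambda = 0$ forces $x \neq 0$, and its weight equals the weight of $c_x \in C_0$, which is at least the minimum distance $2^{h+1}-2$ of $C_0$. For $\lambda \neq 0$ the weight equals $2^{h+1} - |\{a\in A_0 : c_x(a) = \lambda\}|$, so it suffices to prove that this fiber has exactly two elements whenever $x \neq 0$. Writing $\lambda = \sum_j\lambda_j\alpha^j$, the fiber is the set of solutions in $\mathbb{F}_{2^{h+1}}$, intersected with $A_0$, of the affine system $\mathrm{tr}(xaw^j) = \lambda_j$, $j = 0,\ldots,h-1$. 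For $x \neq 0$ the underlying linear map $a\mapsto (\mathrm{tr}(xaw^j))_j \in \mathbb{F}_2^h$ is surjective: since $w$ has $\mathbb{F}_2$-degree $h+1$, the elements $1,w,\ldots,w^{h-1}$ are $\mathbb{F}_2$-linearly independent in $\mathbb{F}_{2^{h+1}}$, and the non-degeneracy of $\mathrm{tr}$ as an $\mathbb{F}_2$-bilinear form then yields independence of the $h$ linear forms. Hence the full solution set in $\mathbb{F}_{2^{h+1}}$ has size $2^{(h+1)-h}=2$, and because $c_x(0) = 0 \neq \lambda$ both solutions lie in $A_0$. Combining the two cases gives $d(C) \geqslant 2^{h+1}-2$, and matching the fractional Singleton bound $d \leqslant n - \lceil r/h\rceil + 1 = 2^{h+1}-2$ shows $C$ is fractional MDS with the required parameters.

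I do not expect a serious obstacle: the fiber-counting step is essentially the same surjectivity computation that already underlies Theorem~\ref{k2cont}, which there gave a unique zero coordinate for each nonzero codeword of $C_0$; by the very same dimension count, each nonzero value in $\mathbb{F}_{2^h}$ is attained by such a codeword at exactly two coordinates of $A_0$. Appending the constant coordinate $\lambda$ therefore spends the extra $h$ dimensions of $\mathbb{F}_{2^h}$ in the code without eroding the minimum distance, which is exactly the mechanism that converts the $k=2$ length bound $n = 2^{h+1}-1$ into the $k=3$ length bound $n = 2^{h+1}$.
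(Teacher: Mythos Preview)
Your argument is correct, and it is essentially the paper's construction in different clothing: identifying your parameters $(\lambda,x)$ with the paper's $(x_1,x_2)$ and your appended $\lambda$-coordinate with the paper's coordinate at $a=0$, the two codes coincide, and your fiber count is exactly the ``unique intersection of $h$ hyperplanes in $\mathrm{PG}(h,2)$'' step in the paper's proof. The only cosmetic omission is the trivial subcase $\lambda\neq 0$, $x=0$, where $c_0\equiv 0$ so the fiber over $\lambda$ is empty and the weight is $2^{h+1}$.
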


\begin{proof}

Let $w$ be a primitive element of ${\mathbb F}_{2^{h+1}}$ and let $\{1,\alpha,\ldots,\alpha^{h-1} \}$ be a basis for ${\mathbb F}_{2^h}$ over ${\mathbb F}_{2}$.

For each $(x_1,x_2) \in  {\mathbb F}_{2^{h}} \times {\mathbb F}_{2^{h+1}}$, we define a codeword whose coordinates are indexed by $a \in {\mathbb F}_{2^{h+1}}$ and whose entry is
$$
x_1+\sum_{j=0}^{h-1}\mathrm{tr} (x_2aw^j) \alpha^j
$$
where $\mathrm{tr}$ is the trace from ${\mathbb F}_{2^{h+1}}$ to ${\mathbb F}_{2}$.

Thus $|C|=2^{2h+1}$ and $n=2^{h+1}$, so it suffices to prove that $d=n-2$.

Suppose that there are $a_1,a_2 ,a_3$, distinct elements of ${\mathbb F}_{2^{h+1}}$ such that 
$$
0=x_1+\sum_{j=0}^{h-1}\mathrm{tr} (x_2a_iw^j) \alpha^j,
$$
for $ i \in \{1,2,3\}$. 

Observe that if $x_2=0$ then $x_1=0$ and we have the zero codeword. Thus, we assume $x_2 \neq 0$.

Thus,
$$
0=\sum_{j=0}^{h-1}\mathrm{tr} (x_2(a_1-a_2)w^j) \alpha^j=\sum_{j=0}^{h-1}\mathrm{tr} (x_2(a_1-a_3)w^j) \alpha^j,
$$

Since $\{1,\alpha,\ldots,\alpha^{h-1} \}$ are linearly independent over ${\mathbb F}_{2}$ it follows that 
$$
0=\mathrm{tr} (x_2(a_1-a_2)w^j)=\mathrm{tr} (x_2(a_1-a_3)w^j),
$$
for $j \in \{0,\ldots,h-1\}$.

The equations 
$$
0=\mathrm{tr} (x_2aw^j),
$$
for $j \in \{0,\ldots,h-1\}$, define $h$ independent hyperplanes in PG$(h,2)$, so there is a unique point $a$ which is in the intersection of these hyperplanes.

This implies
$$
x_2(a_1-a_2)=x_2(a_1-a_3).
$$
Since $x_2 \neq 0$, this implies $a_2=a_3$, a contradiction.
\end{proof}

Although the codes constructed in Theorem~\ref{k2cont} and Theorem~\ref{k3cont} have very small rate, it is of interest that their length is superior to that of their linear counterparts. In the next section we classify additive MDS codes over small fields and observe, once more, that there are additive fractional MDS codes whose length exceeds the length of any known linear MDS code.

In the previous two constructions, we have that $k$ is small. In the following construction, we look at the other extreme, when $d$ is small. 

\begin{theorem}  \label{partition}
Let $\pi_0$ and $\pi_{\infty}$ be $h$-dimensional subspaces of ${\mathbb F}_{q}^{2h}=\pi_0 \oplus \pi_{\infty}$. If there are $r_i$-dimensional subspaces $\pi_i$ such that $r_i \leqslant h$,
$$
\pi_i \cap \pi_j=\{0\}
$$
for all distinct $i,j \in \{0,1,\ldots,k \} \cup \{\infty\}$ where
$$
\sum_{i=1}^k {r_i}=r
$$
then there is an  $[ \lceil r/h \rceil +2,r/h,3]_{q}^h$ additive MDS code $C$.
\end{theorem}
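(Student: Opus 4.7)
The plan is to invert the argument used in the proof of Theorem~\ref{kbound}: recover from the subspaces $\pi_i$ a system of additive subsets $S_i\subseteq\mathbb{F}_{q^h}$ together with additive maps $f_i:S_i\to\mathbb{F}_{q^h}$, and then write down the code directly.

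First I would fix an $\mathbb{F}_q$-linear isomorphism $\mathbb{F}_q^{2h}\cong\mathbb{F}_{q^h}\oplus\mathbb{F}_{q^h}$ sending $\pi_\infty$ onto the first summand and $\pi_0$ onto the second. The hypotheses $\pi_i\cap\pi_\infty=\pi_i\cap\pi_0=\{0\}$ then force both coordinate projections to be $\mathbb{F}_q$-linear injections on each $\pi_i$. Letting $S_i\subseteq\mathbb{F}_{q^h}$ be the image of $\pi_i$ under the first projection and $f_i:S_i\to\mathbb{F}_{q^h}$ the composition of its inverse with the second projection, one obtains $\pi_i=\{(x,f_i(x)):x\in S_i\}$ with $\dim_{\mathbb{F}_q}S_i=r_i$ and $f_i$ additive.

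Next I would define
$$
C=\left\{\left(u_1,\ldots,u_k,\ \sum_{i=1}^k u_i,\ \sum_{i=1}^k f_i(u_i)\right):u_i\in S_i\right\}\subseteq\mathbb{F}_{q^h}^{k+2}.
$$
Additivity of the $S_i$ and linearity of the $f_i$ make $C$ an $\mathbb{F}_q$-subspace, and since the first $k$ coordinates range freely, $|C|=\prod_i q^{r_i}=q^r$. Thus $C$ has length $n=k+2=\lceil r/h\rceil+2$ and size $q^r$; once I establish $d\geq 3$ the Singleton bound $d\leq n-\lceil r/h\rceil+1=3$ will pin $d=3$ and give MDS-ness.

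To check $d\geq 3$ I would exclude nonzero codewords of weight at most two. If all $u_i=0$ the codeword is zero; if exactly one $u_i\neq 0$ with the tail zero then $\sum_j u_j=u_i\neq 0$, ruling out weight one. For weight two with a single $u_i\neq 0$ paired with exactly one nonzero tail entry, either the constraint $\sum_j u_j=0$ forces $u_i=0$, or $\sum_j f_j(u_j)=0$ gives $f_i(u_i)=0$ and hence $(u_i,0)\in\pi_i\cap\pi_\infty=\{0\}$, both contradictions. The delicate case is two nonzero entries $u_i,u_j$ in the first $k$ positions with both final coordinates vanishing: then $u_j=-u_i\in S_j$, so $u_i\in S_i\cap S_j$, and additivity of $f_j$ rewrites $\sum_\ell f_\ell(u_\ell)=0$ as $f_i(u_i)=f_j(u_i)$, whence $(u_i,f_i(u_i))\in\pi_i\cap\pi_j=\{0\}$, contradicting $u_i\neq 0$. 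This last step is the main obstacle and uses the full pairwise transversality of the $\pi_i$'s; it is essentially the reverse of the forward argument in Theorem~\ref{kbound}.
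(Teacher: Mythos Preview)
Your proof is correct and follows essentially the same strategy as the paper: identify $\mathbb{F}_q^{2h}$ with $\mathbb{F}_{q^h}^2$ so that $\pi_0,\pi_\infty$ become the coordinate axes, parameterise each $\pi_i$ by an additive subset $S_i$ of $\mathbb{F}_{q^h}$, and build the code of length $k+2$ with the last two coordinates given by additive maps, then verify $d\geqslant 3$ by the same case split on the weight of $(u_1,\ldots,u_k)$. The only cosmetic difference is that the paper uses two abstract additive maps $f,g$ with $\pi_j=\{(f_j(x),g_j(x)):x\in S_j\}$, whereas you exploit $\pi_i\cap\pi_\infty=\{0\}$ to take the first projection as the identity and write $\pi_i=\{(x,f_i(x)):x\in S_i\}$, so your last two coordinates become the concrete $\sum u_i$ and $\sum f_i(u_i)$; this is a specialisation of the paper's $(f,g)$ and makes the weight-2 contradictions slightly more transparent.
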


\begin{proof}
We consider ${\mathbb F}_{q}^{2h}$ as ${\mathbb F}_{q^h}^{2}$ and choose a suitable basis so that
$$
\pi_0=\{(0,x) \ | \ x \in {\mathbb F}_{q^h}\} 
$$
and
$$
\pi_{\infty}= \{(x,0) \ | \ x \in {\mathbb F}_{q^h}\}.
$$
 
Let  $S_j$ be an additive subset of ${\mathbb F}_{q^h}$ of size $q^{r_j}$, for each $j \in \{1,\ldots,k\}$.

We define additive maps $f$ and $g$ from
$$
(S_1,\ldots,S_{k}) \rightarrow {\mathbb F}_{q^h}
$$
such that
$$
\pi_j=\{(f(0,\ldots,0,x_j,0,\ldots,0),g(0,\ldots,0,x_j,0,\ldots,0)) \ | \ x \in S_j\},
$$
We define an additive code
$$
C=\{ (u_1,u_2,\ldots,u_{k},f(u_1,\ldots,u_{k}),g(u_1,\ldots,u_{k})) \ | \ u_i \in S_i \}.
$$
and note that $|C|=q^r$.

Let $u \in C$ be a codeword. We need to show that the weight of $u$ is at least $3$.

If the weight of $(u_1,\ldots,u_k)$ is at least $3$ then the weight of $u$ is at least $3$.

If the weight of $(u_1,\ldots,u_k)$ and the weight $u$ is $2$ then we can suppose that there exists distinct $i$ and $j$ such that $u_i, u_j \neq 0$ and 
$$
f(0,\ldots,0,u_i,0,\ldots,0,u_j,0,\ldots,0)=g(0,\ldots,0,u_i,0,\ldots,0,u_j,0,\ldots,0)=0.
$$
Since $f$ and $g$ are additive, we have that
$$
f(0,\ldots,0,u_i,0,\ldots,0)=f(0,\ldots,0,-u_j,0,\ldots,0)
$$
and likewise for $g$, which implies that $\pi_i$ and $\pi_j$ have non-trivial intersection.
Since this is not the case the weight of $u$ is at least $3$.

If the weight of $(u_1,\ldots,u_k)$ is $1$ then suppose that $u_i \neq 0$ and that
$$
f(0,\ldots,0,u_i,0,\ldots,0)=0.
$$
This implies that $\pi_i$ and $\pi_0$ have non-trivial intersection.
Since this is not the case the weight of $u$ is at least $3$. The same argument holds for $g$ and $\pi_{\infty}$.

Therefore, $C$ has minimum weight $3$, which implies that it has minimum distance $3$.
\end{proof}


\begin{theorem} 
There is an  $[\lceil r/h \rceil +2,r/h,3]_{q}^h$ additive MDS code $C$ for all $\lceil r/h \rceil \leqslant q^h-1$.
\end{theorem}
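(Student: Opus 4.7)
The plan is to apply Theorem~\ref{partition} with a carefully chosen collection of subspaces coming from a spread. Given $k=\lceil r/h\rceil \leqslant q^h-1$ and $r=(k-1)h+r_0$ with $1\leqslant r_0 \leqslant h$, I need to exhibit $h$-dimensional subspaces $\pi_0,\pi_\infty$ with $\pi_0\oplus\pi_\infty={\mathbb F}_q^{2h}$, together with subspaces $\pi_1,\ldots,\pi_k$ of dimensions $r_1,\ldots,r_k$ with $r_i\leqslant h$, $\sum r_i=r$, all pairwise meeting trivially.

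The key idea is to use a Desarguesian spread. Identifying ${\mathbb F}_q^{2h}$ with ${\mathbb F}_{q^h}^2$, the set $\mathcal S$ of one-dimensional ${\mathbb F}_{q^h}$-subspaces, viewed as $h$-dimensional ${\mathbb F}_q$-subspaces, forms a spread of size $q^h+1$ whose members pairwise intersect trivially. I would take $\pi_0=\{(0,y):y\in{\mathbb F}_{q^h}\}$ and $\pi_\infty=\{(x,0):x\in{\mathbb F}_{q^h}\}$, then choose any $k-1$ further spread elements as $\pi_1,\ldots,\pi_{k-1}$ (each of dimension $h$), and finally pick $\pi_k$ to be an $r_0$-dimensional ${\mathbb F}_q$-subspace of one remaining spread element $S_k\in\mathcal S\setminus\{\pi_0,\pi_\infty,\pi_1,\ldots,\pi_{k-1}\}$. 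The inequality $k\leqslant q^h-1$ ensures that $\mathcal S$ contains enough members, since $(k-1)+2+1\leqslant q^h+1$.

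The trivial intersection property is immediate: any two of $\pi_0,\pi_\infty,\pi_1,\ldots,\pi_{k-1}$ are distinct spread elements and hence meet only in $0$, while $\pi_k\subseteq S_k$ meets each of them trivially because $S_k$ does. The dimension count gives $\sum_{i=1}^k r_i=(k-1)h+r_0=r$, as required. Theorem~\ref{partition} then produces the desired $[\lceil r/h\rceil+2,r/h,3]_q^h$ additive MDS code.

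There is essentially no obstacle here, since the Desarguesian spread delivers the needed configuration for free; the only point requiring any care is the bookkeeping to confirm that $k-1+2\leqslant q^h-1+2 < q^h+1$ spread members can be chosen disjointly, which is exactly the hypothesis $k\leqslant q^h-1$. In particular, when $r_0=h-1$ this shows that the bound in Theorem~\ref{kbound} is attained (up to the rounding $\lfloor\cdot\rfloor$ forced by integrality of $k$), completing the claim mentioned in the paragraph preceding the theorem.
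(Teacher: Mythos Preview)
Your proposal is correct and follows essentially the same approach as the paper: both arguments take a spread of ${\mathbb F}_q^{2h}$ into $q^h+1$ pairwise disjoint $h$-dimensional subspaces, set aside two spread elements as $\pi_0$ and $\pi_\infty$, carve subspaces of appropriate dimensions out of $k$ of the remaining $q^h-1$ spread elements, and then invoke Theorem~\ref{partition}. The only cosmetic difference is in the distribution of the $r_i$: the paper takes $h-r_0$ subspaces of dimension $h-1$ and the rest of dimension $h$, whereas you take $k-1$ full spread elements and a single $r_0$-dimensional piece of one more; both choices sum to $r=(k-1)h+r_0$, and your choice is slightly cleaner in that it works uniformly without needing $k\geqslant h-r_0$.
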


\begin{proof}
Let $r=(k-1)h+r_0$.

The non-zero vectors of ${\mathbb F}_q^{2h}$ can be partitioned in to $q^h+1$ subspaces $\pi_0,\ldots,\pi_{q^h}$ of dimension $h$. We can then fix $\pi_0$ and $\pi_{\infty}$ to be two of these subspaces and choose $r_i=h-1$ for $i=1,\ldots,h-r_0$ and $r_i=h$ for $i=k-h+r_0$ using the remaining $q^h-1$ subspaces.

Then we have that
$$
\sum_{i=1}^k r_i=h(k-1)+r_0
$$
and the theorem follows from Theorem~\ref{partition}.
\end{proof}

\section{Additive MDS codes over small finite fields}

In this section we aim to classify all additive MDS codes over fields of size $q \leqslant 9$ and small enough values for $r$.
In the integral case, these codes had already been classified in \cite{BGL2023}.
We continue this classification here, considering the fractional case too. We will only consider faithful MDS codes, since an unfaithful MDS code can always be converted in a faithful MDS code with the same parameters, as mentioned in Remark~\ref{onlyrem}. We do this by classifying the projective systems corresponding to (fractional) additive codes as detailed in Theorem \ref{thm:proj_system}. In particular, a faithful $[n,r/h,d]_q^h$ MDS code is equivalent to a set of $(h-1)$-dimensional subspaces of size $n$, any $\lceil r/h \rceil$ of which span PG$(r-1,q)$.
We may assume $n \geqslant \lceil r/h \rceil +1$, since otherwise the condition is trivially satisfied.

A projective system corresponding to a fractional additive $[n,r/h,d]_q^h$ code is called a {\em fractional subspace-arc}. A fractional subspace-arc is {\em complete} if it can not be extended to a larger fractional subspace-arc. If $h=2$ we speak of {\em line-arcs} and if $h=3$ of {\em plane-arcs}.

In the following tables, the integral MDS codes are indicated by $r$ in bold face. Constructions obtained from Theorem~\ref{k2cont} are indicated by a $\dagger$, and constructions obtained from Theorem~\ref{k3cont} are indicated by a $*$.

The computational results were obtained using the GAP-package FinInG \cite{GAP,FinInG}.

\subsection{Additive MDS codes over ${\mathbb F}_4$}

In this case, $q=2$ and $h=2$. By Theorem~\ref{kbound}, we have $k \leqslant 3$, so $r \leqslant 6$, and 
by Theorem~\ref{dbound}, $n \leqslant 8$ if $r=5$ and $n \leqslant 6$ if $r=6$. The number of equivalence classes of (fractional) line-arcs are collected in Table \ref{codes4}.
\begin{table}[h] 
\begin{center}
\begin{tabular}{|r|c|c|c|c|c|c|} \hline
length $n$ & 3 & 4 & 5 & 6 & 7 & 8 \\ \hline
$r=3$ & 2&  2&  1&   1&  $1^{\dagger}$ & -  \\ \hline
$\mathbf{r=4}$ &1 & 1 & 1 & - &- & - \\ \hline
$r=5$ & - & 5&  8&  6& 1& $1^{*}$ \\ \hline
$\mathbf{r=6}$ & - & 1 &1 & 1 & - & - \\ \hline
\end{tabular}
\vspace{0.3cm}
\caption{The number of additive MDS codes over ${\mathbb F}_4$.}\label{codes4}
\end{center}
\end{table}

The most interesting case is that of fractional line-arcs in ${\mathrm{PG}}(4,2)$, where the maximal size is 8. Amongst the representatives for the equivalent classes of these line-arcs, the only complete fractional line-arcs of size strictly smaller than 8 have size 6 and there are 5 of them. So there is only one non-complete fractional line-arc of size 6. This implies that each two of the fractional line-arcs of size 6 contained in the unique fractional line-arc of size 8 are equivalent.

\subsection{Additive MDS codes over ${\mathbb F}_8$}

In this case $q=2$ and $h=3$. So the projective systems are plane-arcs in ${\mathrm{PG}}(r-1,2)$.
By Theorem~\ref{kbound}, we have $k \leqslant 7$, if $r_0 \in \{2,3\}$ and $k \leqslant 8$ if $r_0=1$. 
Table~\ref{codes8} contains the number of additive MDS codes for $r \leqslant 9$.

\begin{table}[h] 
\begin{center}
\begin{tabular}{|r|c|c|c|c|c|c|c|c|c|c|c|c|c|c|} \hline
length $n$ & 3 & 4 & 5 & 6 & 7 & 8 & 9 & 10 & 11 & 12 & 13 & 14 & 15 & 16 \\ \hline
$r=4$ & 2&  3&  4&  5& 6& 6& 5& 4& 3& 2& 1& 1& 1 ($\geqslant 1^{\dagger})$ &  - \\ \hline
$r=5$ & 2&  4&  10&  14& 19& 9& 4&0 & -&- &- &- &- &  - \\ \hline
$\mathbf{r=6}$ & 1&  1&  2&  1& 1& 1& 1& -&- & -& -&- &- & -  \\ \hline
$r=7$ &- &  { 33}& { 1895} &  & & & & & & & & & &  $\geqslant 1^*$  \\ \hline
$r=8$ & -&  { 6}&  { 165}&  { 15480}& & & & & & -&- &- &- & -  \\ \hline
$\mathbf{r=9}$ &- & 1 & 2 & 4 & 2 & 2 & 2 & 1  & - & - &  - & - & - & -\\ \hline
\end{tabular}
\vspace{0.3cm}
\caption{The number of MDS additive codes over ${\mathbb F}_8$.}\label{codes8}
\end{center}
\end{table}

The maximal size of a plane-arc in ${\mathrm{PG}}(5,2)$ ($r=6$) is 9 (integral case). Up to equivalence, there is a unique example for each size except in the case that $n= 5$, for which there are two inequivalent examples.
As of now, the classification of fractional plane-arcs in ${\mathrm{PG}}(6,2)$ and ${\mathrm{PG}}(7,2)$ could not yet be completed.

\subsection{Additive MDS codes over ${\mathbb F}_9$}

The projective systems corresponding to these ternary codes are (fractional) line-arcs in ${\mathrm{PG}}(r-1,3)$.
By Theorem~\ref{kbound}, we have $k \leqslant 8$, so $r \leqslant 16$. Table~\ref{codes9} details the number of additive MDS codes for $r \leqslant 6$.

\begin{table}[h] 
\begin{center}
\begin{tabular}{|r|c|c|c|c|c|c|c|c|c|c|c|c|c|c|} \hline
length $n$ & 3 & 4 & 5 & 6 & 7 & 8 & 9 & 10 & 11 & 12 & 13 & 14 & 15 & 16 \\ \hline
$r=3$ & 2&  3&  3&  4& 4& 3& 3& 2&1& 1& $1^{\dagger}$ & -& -&  - \\ \hline
$\mathbf{r=4}$ & 1&  3&  4&  5& 4& 3& 2& 2& -&- &- &- &- &  - \\ \hline
$r=5$ & {  2}&  {  6}&  {  48}&  {  1167}& {  21248}& {  145451}& {  273753}& {  96854}&{   3039}&6 &0 & 0& -&  - \\ \hline
$\mathbf{r=6}$ & -& 1 & 4 & 13 & 4 & 3 & 1 & 1 & 0 & -& -  &- &- &- \\ \hline
\end{tabular}
\vspace{0.3cm}
\caption{The number of MDS additive codes over ${\mathbb F}_9$.}\label{codes9}
\end{center}
\end{table}
The maximal size of a fractional line-arc in ${\mathrm{PG}}(4,3)$ is 12, and there are 6 examples up to projective equivalence.
The number of examples of sizes 1 up to 12 is as in the following list
$$
[1, 2, 2, 6, 48, 1167, 21248, 145451, 273753, 96854, 3039, 6 ]
$$
up to projective equivalence.
Amongst these, the numbers of complete fractional line-arcs in ${\mathrm{PG}}(4,3)$ of sizes 1 up to 12 is as follows
$$
[0, 0, 0, 0, 0,  0, 0, 342, 21787, 68725, 3003, 6 ].
$$
So the smallest complete fractional line-arcs have size 8.
\begin{remark} \label{12arc}
Interestingly, each of the fractional line-arcs in ${\mathrm{PG}}(4,3)$ of maximum size 12 consists of pairwise disjoint lines. However, there are complete fractional line-arcs of smaller size which do not satisfy this property. The number of ${\mathrm{PGL}}$-equivalence classes of fractional line-arcs  in ${\mathrm{PG}}(4,3)$ which consist of pairwise disjoint lines is
$$
[1, 1, 1, 3, 27, 607, 12386, 100185, 227659, 91720, 3013, 6],
$$
and amongst these, the number of complete fractional line-arcs is
$$
[0, 0, 0, 0, 0, 0, 0, 0, 2802, 63788, 2977, 6].
$$
So the smallest complete fractional line-arc in ${\mathrm{PG}}(4,3)$ consisting of pairwise disjoint lines has size 9.
\end{remark}

{\bf Representatives of fractional line-arcs.}
Here we list the representatives of the six equivalence classes of fractional line-arcs of size 12 in ${\mathrm{PG}}(4,3)$ up to projective equivalence. Each list consists of twelve $(2\times 5)$ matrices with entries from ${\mathbb{F}}_3$, whose rows span a line in ${\mathrm{PG}}(4,3)$.
\[
A_1:\begin{matrix}
    \left[\begin{smallmatrix} 10000 \\ 01000 \end{smallmatrix}\right],
    \left[\begin{smallmatrix} 10001 \\ 01010 \end{smallmatrix}\right],
    \left[\begin{smallmatrix} 10002 \\ 01100 \end{smallmatrix}\right],
    \left[\begin{smallmatrix} 10010 \\ 01101 \end{smallmatrix}\right],
    \left[\begin{smallmatrix} 10011 \\ 01200 \end{smallmatrix}\right],
    \left[\begin{smallmatrix} 10012 \\ 01210 \end{smallmatrix}\right],
    \left[\begin{smallmatrix} 10100 \\ 01012 \end{smallmatrix}\right],
    \left[\begin{smallmatrix} 10112 \\ 01222 \end{smallmatrix}\right],
    \left[\begin{smallmatrix} 10122 \\ 01002 \end{smallmatrix}\right],
    \left[\begin{smallmatrix} 10200 \\ 01220 \end{smallmatrix}\right],
    \left[\begin{smallmatrix} 10220 \\ 01211 \end{smallmatrix}\right],
    \left[\begin{smallmatrix} 01011 \\ 00112 \end{smallmatrix}\right]
\end{matrix}
\]
\[
A_2: \begin{matrix}
    \left[\begin{smallmatrix} 10000 \\ 01000 \end{smallmatrix}\right],
    \left[\begin{smallmatrix} 10001 \\ 01010 \end{smallmatrix}\right],
    \left[\begin{smallmatrix} 10002 \\ 01100 \end{smallmatrix}\right],
    \left[\begin{smallmatrix} 10010 \\ 01101 \end{smallmatrix}\right],
    \left[\begin{smallmatrix} 10011 \\ 01200 \end{smallmatrix}\right],
    \left[\begin{smallmatrix} 10012 \\ 01210 \end{smallmatrix}\right],
    \left[\begin{smallmatrix} 10100 \\ 01012 \end{smallmatrix}\right],
    \left[\begin{smallmatrix} 10121 \\ 01001 \end{smallmatrix}\right],
    \left[\begin{smallmatrix} 10220 \\ 01122 \end{smallmatrix}\right],
    \left[\begin{smallmatrix} 10222 \\ 01112 \end{smallmatrix}\right],
    \left[\begin{smallmatrix} 11022 \\ 00102 \end{smallmatrix}\right],
    \left[\begin{smallmatrix} 12010 \\ 00112 \end{smallmatrix}\right]
\end{matrix}
\]
\[
A_3:\begin{matrix}
    \left[\begin{smallmatrix} 10000 \\ 01000 \end{smallmatrix}\right],
    \left[\begin{smallmatrix} 10001 \\ 01010 \end{smallmatrix}\right],
    \left[\begin{smallmatrix} 10002 \\ 01100 \end{smallmatrix}\right],
    \left[\begin{smallmatrix} 10010 \\ 01101 \end{smallmatrix}\right],
    \left[\begin{smallmatrix} 10011 \\ 01200 \end{smallmatrix}\right],
    \left[\begin{smallmatrix} 10012 \\ 01210 \end{smallmatrix}\right],
    \left[\begin{smallmatrix} 10110 \\ 01012 \end{smallmatrix}\right],
    \left[\begin{smallmatrix} 10201 \\ 01112 \end{smallmatrix}\right],
    \left[\begin{smallmatrix} 10221 \\ 01111 \end{smallmatrix}\right],
    \left[\begin{smallmatrix} 10021 \\ 00121 \end{smallmatrix}\right],
    \left[\begin{smallmatrix} 12010 \\ 00110 \end{smallmatrix}\right],
    \left[\begin{smallmatrix} 01002 \\ 00120 \end{smallmatrix}\right]
\end{matrix}
\]
\[
A_4:\begin{matrix}
    \left[\begin{smallmatrix} 10000 \\ 01000 \end{smallmatrix}\right],
    \left[\begin{smallmatrix} 10001 \\ 01010 \end{smallmatrix}\right],
    \left[\begin{smallmatrix} 10002 \\ 01100 \end{smallmatrix}\right],
    \left[\begin{smallmatrix} 10010 \\ 01101 \end{smallmatrix}\right],
    \left[\begin{smallmatrix} 10011 \\ 01200 \end{smallmatrix}\right],
    \left[\begin{smallmatrix} 10012 \\ 01211 \end{smallmatrix}\right],
    \left[\begin{smallmatrix} 10110 \\ 01011 \end{smallmatrix}\right],
    \left[\begin{smallmatrix} 10121 \\ 01112 \end{smallmatrix}\right],
    \left[\begin{smallmatrix} 10021 \\ 00102 \end{smallmatrix}\right],
    \left[\begin{smallmatrix} 12010 \\ 00120 \end{smallmatrix}\right],
    \left[\begin{smallmatrix} 12022 \\ 00101 \end{smallmatrix}\right],
    \left[\begin{smallmatrix} 10102 \\ 00010 \end{smallmatrix}\right]

\end{matrix}
\]
\[
A_5:\begin{matrix}
    \left[\begin{smallmatrix} 10000 \\ 01000 \end{smallmatrix}\right],
    \left[\begin{smallmatrix} 10001 \\ 01010 \end{smallmatrix}\right],
    \left[\begin{smallmatrix} 10002 \\ 01100 \end{smallmatrix}\right],
    \left[\begin{smallmatrix} 10010 \\ 01101 \end{smallmatrix}\right],
    \left[\begin{smallmatrix} 10011 \\ 01200 \end{smallmatrix}\right],
    \left[\begin{smallmatrix} 10012 \\ 01221 \end{smallmatrix}\right],
    \left[\begin{smallmatrix} 10100 \\ 01020 \end{smallmatrix}\right],
    \left[\begin{smallmatrix} 10101 \\ 01211 \end{smallmatrix}\right],
    \left[\begin{smallmatrix} 10102 \\ 01122 \end{smallmatrix}\right],
    \left[\begin{smallmatrix} 10120 \\ 01111 \end{smallmatrix}\right],
    \left[\begin{smallmatrix} 10122 \\ 01212 \end{smallmatrix}\right],
    \left[\begin{smallmatrix} 01011 \\ 00112 \end{smallmatrix}\right]
\end{matrix}
\]
\[
A_6:\begin{matrix}
    \left[\begin{smallmatrix} 10000 \\ 01000 \end{smallmatrix}\right],
    \left[\begin{smallmatrix} 10001 \\ 01010 \end{smallmatrix}\right],
    \left[\begin{smallmatrix} 10002 \\ 01100 \end{smallmatrix}\right],
    \left[\begin{smallmatrix} 10010 \\ 01101 \end{smallmatrix}\right],
    \left[\begin{smallmatrix} 10011 \\ 01200 \end{smallmatrix}\right],
    \left[\begin{smallmatrix} 10012 \\ 01222 \end{smallmatrix}\right],
    \left[\begin{smallmatrix} 10200 \\ 01020 \end{smallmatrix}\right],
    \left[\begin{smallmatrix} 10201 \\ 01221 \end{smallmatrix}\right],
    \left[\begin{smallmatrix} 10202 \\ 01011 \end{smallmatrix}\right],
    \left[\begin{smallmatrix} 10220 \\ 01211 \end{smallmatrix}\right],
    \left[\begin{smallmatrix} 10221 \\ 01121 \end{smallmatrix}\right],
    \left[\begin{smallmatrix} 10222 \\ 01112 \end{smallmatrix}\right]
\end{matrix}
\]

{\bf Fractional additive MDS codes.}
Below we list the generator matrix $G_i\in {\mathbb{F}}_9^{5\times 12}$ of the fractional additive $[12,5/2,10]_3^2$ MDS code $C_i$ corresponding to the fractional line-arc $A_i$, $i=1,\ldots,6$, listed above. The element $\omega$ is a primitive element of ${\mathbb{F}}_9$ with minimal polynomial $x^2-x-1$.
\[G_1 = \left(
\begin{smallmatrix}
1 & 1 & 1 & 1 & 1 & 1 & 1 & 1 & 1 & 1 & 1 & 0 \\
\omega & \omega & \omega & \omega & \omega & \omega & \omega & \omega & \omega & \omega & \omega & 1 \\
0 & 0 & \omega & \omega & \omega^5 & \omega^5 & 1 & \omega^3 & 1 & \omega^6 & \omega^6 & \omega \\
0 & \omega & 0 & 1 & 1 & \omega^2 & \omega & \omega^3 & \omega & \omega^5 & \omega^7 & \omega^2 \\
0 & 1 & \omega & \omega & 1 & \omega & \omega^5 & \omega^6 & \omega^6 & 0 & \omega & \omega^3 \\
\end{smallmatrix}
\right)
\]

\[G_2 = \left(
\begin{smallmatrix}
1 & 1 & 1 & 1 & 1 & 1 & 1 & 1 & 1 & 1 & 1 & 1 \\
\omega & \omega & \omega & \omega & \omega & \omega & \omega & \omega & \omega & \omega & 1 & \omega \\
0 & 0 & \omega & \omega & \omega^5 & \omega^5 & 1 & 1 & \omega^7 & \omega^7 & \omega & \omega \\
0 & \omega & 0 & 1 & 1 & \omega^2 & \omega & 1 & \omega^6 & \omega^7 & 1 & \omega^2 \\
0 & 1 & \omega & \omega^2 & 1 & \omega & \omega^5 & \omega^2 & \omega^5 & \omega^6 & \omega^6 & \omega^5 \\
\end{smallmatrix}
\right)
\]

\[G_3 = \left(
\begin{smallmatrix}
1 & 1 & 1 & 1 & 1 & 1 & 1 & 1 & 1 & 1 & 1 & 0 \\
\omega & \omega & \omega & \omega & \omega & \omega & \omega & \omega & \omega & 0 & 1 & \omega \\
0 & 0 & \omega & \omega & \omega^5 & \omega^5 & 1 & \omega^7 & \omega & \omega & \omega & 1 \\
0 & \omega & 0 & 1 & 1 & \omega^2 & \omega^2 & \omega & \omega^7 & \omega^6 & \omega^2 & \omega^5 \\
0 & 1 & \omega & \omega^2 & 1 & \omega & \omega^5 & \omega^3 & \omega^2 & \omega^2 & 0 & \omega \\
\end{smallmatrix}
\right)
\]

\[G_4 = \left(
\begin{smallmatrix}
1 & 1 & 1 & 1 & 1 & 1 & 1 & 1 & 1 & 1 & 1 & 0 \\
\omega & \omega & \omega & \omega & \omega & \omega & \omega & \omega & \omega & \omega & \omega & 1 \\
0 & 0 & \omega & \omega & \omega^5 & \omega^5 & \omega & \omega^6 & \omega & \omega^6 & \omega^7 & \omega^7 \\
0 & \omega & 0 & 1 & 1 & \omega^3 & \omega^5 & \omega^5 & \omega & \omega^7 & \omega^6 & \omega^7 \\
0 & 1 & \omega & \omega^2 & 1 & \omega^6 & 0 & \omega^2 & \omega^7 & \omega & \omega^2 & \omega^6 \\
\end{smallmatrix}
\right)
\]

\[G_5 = \left(
\begin{smallmatrix}
1 & 1 & 1 & 1 & 1 & 1 & 1 & 1 & 1 & 1 & 1 & 0 \\
\omega & \omega & \omega & \omega & \omega & \omega & \omega & \omega & \omega & \omega & \omega & 1 \\
0 & 0 & \omega & \omega & \omega^5 & \omega^5 & 1 & \omega^3 & \omega^2 & \omega^2 & \omega^3 & \omega \\
0 & \omega & 0 & 1 & 1 & \omega^3 & \omega^5 & \omega^5 & \omega & \omega^7 & \omega^6 & \omega^7 \\
0 & 1 & \omega & \omega^2 & 1 & \omega^6 & 0 & \omega^2 & \omega^7 & \omega & \omega^2 & \omega^6 \\
\end{smallmatrix}
\right)
\]
\[
G_6 = \left(
\begin{smallmatrix}
1 & 1 & 1 & 1 & 1 & 1 & 1 & 1 & 1 & 1 & 1 & 0 \\
\omega & \omega & \omega & \omega & \omega & \omega & \omega & \omega & \omega & \omega & \omega & 1 \\
0 & 0 & \omega & \omega & \omega^5 & \omega^5 & \omega & \omega^6 & \omega & \omega^6 & \omega^7 & \omega^7 \\
0 & \omega & 0 & 1 & 1 & \omega^3 & \omega^5 & \omega^5 & \omega & \omega^7 & \omega^6 & \omega^7 \\
0 & 1 & \omega & \omega & 1 & \omega & \omega^5 & \omega^3 & \omega^2 & \omega^2 & 0 & \omega \\
\end{smallmatrix}
\right)
\]

\section{Conclusion}
In this article we introduced new bounds relating the parameters of an additive code over a finite field, Theorem~\ref{addbound} and Theorem~\ref{addbound2}. These bounds have similarities to the Griesmer bound for linear codes and can be obtained in a combinatorial manner. It is possible that similar bounds exist which may improve on our bounds here. In the case of MDS codes however, it is unlikely that any improvement would be obtained using solely combinatorial arguments. 
We have provided examples of fractional additive MDS codes which attain the bound, Theorem~\ref{k2cont} and Theorem~\ref{k3cont}. We have also found examples over small fields which surpass the length of Reed-Solomon codes, most notably the $[12,2.5,10]_3^2$ additive codes  mentioned in Remark~\ref{12arc}. It would be of great interest if there are families of fractional MDS codes which are longer than the Reed-Solomon codes for larger values of $r/h$.

The length of the longest $k$-dimensional linear MDS codes is conjectured to be $q+1$ apart from the exceptional cases that $k \in \{3,q-1\}$, $q$ is even and $n=q+2$. The MDS conjecture was proved for $q$ prime in \cite{Ball2012}. For a recent list on the values of $k$ for which the conjecture is known to be true for $q$ non-prime, see \cite{BL2019}. 
It is probable that the MDS conjecture is true for all MDS codes over any alphabet of size $q$, i.e.  that $n\leqslant q+1$, where the code has size $q^{k}$, apart from the previously mentioned exceptional cases that  $k \in \{3,q-1\}$, $q$ is even and $n=q+2$. Note that, for $q$ odd, three-dimensional linear MDS codes of length $q$ and $q + 1$ were classified by Segre in 1955, while the classification of linear MDS codes of length $q-1$ and $q-2$ was only recently completed, see \cite{BL2018}.

There are examples of additive integral non-MDS codes which outperform their linear counterparts constructed using cyclic codes in \cite{GLLM2023}. For example a $[63,5,45]_2^2$ additive code over ${\mathbb F}_4$ is constructed and no linear code is known with these parameters. This additive code is not equivalent to a linear code. Note that a linear code with such parameters might exist. To our knowledge, the only integral additive codes with a parameter set for which we know no such linear code exists are Mathon's $[21,3,18]_3^2$ codes over ${\mathbb F}_9$ \cite{DDHM2002}, and several codes over ${\mathbb F}_4$ constructed from large sets of skew lines in projective space due to Kurz \cite{Kurz2024}. For example, there is a $[89,4,66]_2^2$ code over ${\mathbb F}_4$, but no $[89,4,66]_4^1$ (linear) code exists according to the tables in \cite{Grassl}.

\section{Acknowledgements}

We are grateful to the referees and to Zhiyu Yuan for their comments and suggestions. The referee's comments
led to considerable improvements to this article. We are grateful to Yue Zhou for notifying us that 
Theorem 16  was incorrect as stated in earlier versions of this article. We would like to thank Francesco Pavese for pointing out that their were errors in the data of the
fractional line-arcs of size 12 in ${\mathrm{PG}}(4,3)$. These errors have been rectified in this version.

\end{document}